\newcommand{\nc}{\newcommand}
\nc{\rnc}{\renewcommand}
\newcommand{\proj}[1]{\left|#1\right\rangle\left\langle #1\right|}
\DeclareMathOperator{\Par}{Par}
\DeclareMathOperator{\Per}{Per}
\DeclareMathOperator{\sgn}{sgn}
\DeclareMathOperator{\Wg}{Wg}
\DeclareMathOperator{\QMA}{\mathsf{QMA}}
\def\be#1\ee{\begin{equation}#1\end{equation}}
\def\ba#1\ea{\begin{align}#1\end{align}}
\def\bas#1\eas{\begin{align*}#1\end{align*}}
\def\bpm#1\epm{\begin{pmatrix}#1\end{pmatrix}}
\nc{\non}{\nonumber}
\nc{\nn}{\nonumber}
\nc{\eq}[1]{(\ref{eq:#1})}
\nc{\eqs}[2]{(\ref{eq:#1}) and (\ref{eq:#2})}
\rnc{\L}{\left} 
\nc{\R}{\right}
\nc{\ra}{\rightarrow}
\nc{\ot}{\otimes}
\def\bea#1\eea{\begin{eqnarray}#1\end{eqnarray}}
\def\beas#1\eeas{\begin{eqnarray*}#1\end{eqnarray*}}
\newtheorem{thm}{Theorem}
\newtheorem*{thm*}{Theorem}
\newtheorem{cor}[thm]{Corollary}
\newtheorem{lem}[thm]{Lemma}
\newtheorem{proto}{Protocol}
\theoremstyle{definition}
\newtheorem{remark}{Remark}
\newtheorem{dfn}[thm]{Definition}
\theoremstyle{plain}
\newtheorem*{rep@theorem}{\rep@title}
\newcommand{\newreptheorem}[2]{%
\newenvironment{rep#1}[1]{%
 \def\rep@title{#2 \ref{##1} (restatement)}%
 \begin{rep@theorem}}%
 {\end{rep@theorem}}}
\nc\eps{\epsilon}
\def\Usch{U_{\text{Sch}}}
\nc\cA{\mathcal{A}}
\nc\cB{\mathcal{B}}
\nc\cC{\mathcal{C}}
\nc\cD{\mathcal{D}}
\nc\cE{\mathcal{E}}
\nc\cF{\mathcal{F}}
\nc\cG{\mathcal{G}}
\nc\cH{\mathcal{H}}
\nc\cI{\mathcal{I}}
\nc\cJ{\mathcal{J}}
\nc\cK{\mathcal{K}}
\nc\cL{\mathcal{L}}
\nc\cM{\mathcal{M}}
\nc\cN{\mathcal{N}}
\nc\cO{\mathcal{O}}
\nc\cP{\mathcal{P}}
\nc\cQ{\mathcal{Q}}
\nc\cR{\mathcal{R}}
\nc\cS{\mathcal{S}}
\nc\cT{\mathcal{T}}
\nc\cU{\mathcal{U}}
\nc\cV{\mathcal{V}}
\nc\cW{\mathcal{W}}
\nc\cX{\mathcal{X}}
\nc\cY{\mathcal{Y}}
\nc\cZ{\mathcal{Z}}
\def\bp{\mathbf{p}}
\def\bq{\mathbf{q}}
\nc\bbC{\mathbb{C}}
\DeclareMathOperator*{\E}{\mathbb{E}}
\DeclareMathOperator*{\bbE}{\mathbb{E}}
\nc\bbF{\mathbb{F}}
\nc\bbM{\mathbb{M}}
\nc\bbN{\mathbb{N}}
\nc\bbR{\mathbb{R}}
\nc\bbZ{\mathbb{Z}}
\nc\benum{\begin{enumerate}}
\nc\eenum{\end{enumerate}}
\nc\bit{\begin{itemize}}
\nc\eit{\end{itemize}}
\newcommand{\secref}[1]{Section~\ref{sec:#1}}
\newcommand{\appref}[1]{Appendix~\ref{sec:#1}}
\newcommand{\lemref}[1]{Lemma~\ref{lem:#1}}
\newcommand{\thmref}[1]{Theorem~\ref{thm:#1}}
\nc{\todo}[1]{\textcolor{red}{todo: #1}}
\nc{\Anote}[1]{\textcolor{red}{Aram note: #1}}
\def\begsub#1#2\endsub{\begin{subequations}\label{eq:#1}\begin{align}#2\end{align}\end{subequations}}
\nc\mnb[1]{\medskip\noindent{\bf #1}}
\nc{\pder}[2]{\frac{\partial {#1}}{\partial {#2}}}
\nc{\p}{\partial}
\nc{\ssym}[2]{\vee^{#2}\bbC^{#1}}
\nc{\psym}[2]{P_{\text{sym}}^{#1,#2}}
\nc{\dimsym}[2]{#1[#2]}
\nc{\dsym}[1]{\dimsym{d}{#1}}
\begin{document}

\title{Approximate orthogonality of permutation operators, with
  application to quantum information}
\author{Aram W. Harrow\footnote{Center for Theoretical Physics,
    Massachusetts Institute of Technology, {\tt aram@mit.edu}}}
\date{\today}
\maketitle 

\begin{abstract}
  Consider the $n!$ different unitary matrices that permute $n$
  $d$-dimensional quantum systems.  If $d\geq n$ then they are
  linearly independent.  This paper discusses a sense in which they
  are approximately orthogonal (with respect to the Hilbert-Schmidt
  inner product, $\langle A,B\rangle = \tr A^\dag B/\tr I$) if
  $d\gg n^2$, or, in a different sense, if $d\gg n$.  Previous work
  had shown pairwise approximate orthogonality of these matrices, but
  here we show a more collective statement, quantified in terms of the
  operator norm distance of the Gram matrix to the identity matrix.
  
  This simple point has several applications in quantum information
  and random matrix theory: (1) showing that random maximally
  entangled states resemble fully random states, (2) showing that
  Boson sampling output probabilities resemble those from Gaussian
  matrices, (3) improving the Eggeling-Werner scheme for multipartite
  data hiding, (4) proving that the product test of Harrow-Montanaro
  cannot be performed using LOCC without a large number of copies of
  the state to be tested, (5) proving that the purity of a quantum
  state also cannot be efficiently tested using LOCC, and (6,
  published separately) helping prove that poly-size random quantum
  circuits are poly-designs.  
\end{abstract}

\section{Introduction}
\subsection{Permutations and quantum states}
When quantum states have symmetries under permutation or collective
rotation, it is possible to reduce the
number of parameters in a problem.  But this may come at a cost in
complexity, for example if the small number of parameters label basis
states in irreducible representations which lack simple constructions.

The main focus of the paper is inspired by the following point:
matrices on $(\bbC^d)^{\ot n}$ that commute with collective unitary
rotations are known to be linear combinations of the permutations of
the $n$ qudits (see below for precise definitions).  If $d\gg n$, then
this indeed reduces the number of parameters from $d^{2n}$ to $n!$.
However, these parameters are coefficients of permutation matrices
that are not quite orthogonal to one another (again, in a sense that
we will clarify below).  We will argue that they are {\em almost}
orthogonal, in a manner that suffices for most applications, when $d
\gg n^2$.

To make these claims more precise, we introduce some definitions.
Denote the symmetric group on $n$ elements by $\cS_n$.  This has a
representation $P_d$ on $(\bbC^d)^{\ot n}$ in which the $n$ qudits are
permuted.  Formally, if $\pi\in\cS_n$, then
\be P_d(\pi)=
\sum_{i_1,\ldots,i_n\in [d]} \ket{i_1,\ldots,i_n}\bra{i_{\pi(1)},\ldots,i_{\pi(n)}},
\label{eq:Pmatrix}\ee
where $[d]:=\{1,\ldots,d\}$.
The definition is chosen so that $P_d(\pi_1)P_d(\pi_2) =
P_d(\pi_1\pi_2)$, that is, $P_d$ is a representation.

Let $M_d$ denote the set of $d\times d$ complex matrices and $\cU_d$
the subset of unitary matrices.
One place where the permutation matrices arise is when considering
operators $A\in M_d^{\ot n}$ that commute with every $X^{\ot n}$ for $X\in\cU_d\}$.    Such $A$ can be written as (see Thm 4.1.13 of 
\cite{GW09} or Cor 4 of \cite{Har-sym})
\be A = \sum_{\pi\in\cS_n} a_\pi P_d(\pi),
\label{eq:Pd-decomp}\ee
for some coefficients $a_\pi\in\bbC$.  This decomposition is useful because it reduces the
number of parameters needed to describe $A$.  However, it is inconvenient that the terms
in \eq{Pd-decomp} are not orthogonal.  We will see this in more detail in our applications below, where
\eq{Pd-decomp} becomes useful precisely when we can establish an approximate orthogonality
for the $P_d(\pi)$ matrices.

We will use the following normalized Hilbert-Schmidt inner product:
\be \langle A,B\rangle := \frac{\tr A^\dag B }{\tr I} = \frac{\tr
  A^\dag B}{d^n} = \bra{\Phi_d}^{\otimes n}(I \otimes A^\dag B )\ket{\Phi_d}^{\otimes n} ,\ee
where $\ket{\Phi_d} = \frac{1}{\sqrt{d}} \sum_{i=1}^d
\ket{i,i}$ is the standard maximally entangled state.

\subsection{Overview of results}
In  \secref{approx-ortho} we will show that
the $n!$ different $P_d(\pi)$ are approximately orthogonal, not only
pairwise (which is rather trivial to show), but in a certain global
sense as well.  Specifically we define the $n!$-dimensional Gram
matrix $G^{(n,d)}$ whose $\pi_1,\pi_2$ entry is the inner product
$\langle P_d(\pi_1),P_d(\pi_2)\rangle$ and argue that it is close to
the identity in operator norm.  Previous work typically focused on
entrywise bounds on $G^{-1}$, and while these often obtained much
sharper results, our applications will rely on the operator norm
estimates presented here.

One easy consequence of this approximate orthogonality relation is to
controlling various norms of linear combinations of permutation
matrices, as we discuss in \secref{norms}.  A less obvious, but still
easy, application is showing that the lower moments of random
bipartite states are close to the moments of random maximally
entangled states, which we describe in \secref{states}.  This is
related to the well-known fact that the entries of Haar-random
unitaries appear to be nearly Gaussian, again when we examine only the
low moments. This in turn has application to improving the parameters
in boson sampling, as we discuss in \secref{bosons}.

The next family of results involves limitations on multi-party quantum
operations where the parties are connected by only classical
communication.  More generally, we consider measurements that remain
valid when the partial transpose operator is applied to a subset of
systems, and that commute with rotations of the form $U^{\otimes n}$.
It turns out that these operators are severely constrained and we use
this to analyze the Eggeling-Werner data hiding scheme and the
complexity of purity testing in \secref{hiding} and establish
limitations on product tests in \secref{prod-purity}.  A further
application has appeared in \cite{BHH-designs}, where this approximate
orthogonality is used to analyze the convergence speed of the
low-order moments of random unitary quantum circuits.

Two appendices explore further topics. \appref{rep-theory} fleshes out
some calculations used in \secref{approx-ortho} and
\Cref{app:partitions} explains how replacing Haar uniform unitaries
with other distributions, such as random classical reversible
operations, does not yield the same structure.

\section{Approximate orthogonality}\label{sec:approx-ortho}

\subsection{Statement of results}\label{sec:approx-ortho-statement}
This section gives a quantitative statement of the approximate orthogonality of permutation operators.

First we relate the inner product between a pair of permutations to a natural metric on
the group of permutations.  Observe that
\be \tr P_d(\pi) = d^{c(\pi)}, \ee
where $c(\pi)$ counts the number of cycles of $\pi$.  Let $T_n\subset S_n$ be the set of
$\binom{n}{2}$ transpositions, and let $\Gamma_n := \Gamma(S_n,T_n)$ be the Cayley graph
of $S_n$ defined by this generating set; i.e. the vertices are $S_n$ and there is an edge
between $\pi_1$ and $\pi_2$ iff $\pi_1^{-1}\pi_2 \in T_n$.  Define $|\pi|$ to be the
minimum number of transpositions necessary to obtain $\pi$ from $e$.  Since graph distance
is invariant under multiplication by $S_n$, $|\cdot|$ satisfies the triangle inequality:
\be |\pi_1\pi_2| \leq |\pi_1| + |\pi_2|,\label{eq:triangle}\ee
Observe also that $|\pi| = n - c(\pi)$.  We now calculate
\be \langle P_d(\pi_1),P_d(\pi_2)\rangle = \frac{\tr P_d(\pi_1^{-1}\pi_2)}{ d^n} =
d^{c(\pi_1^{-1}\pi_2)-n} = d^{-|\pi_1^{-1}\pi_2|}.
\label{eq:pairwise}\ee
Thus, $P_d(\pi_1)$ and $P_d(\pi_2)$ are approximately orthonormal when $d$ is large and/or
when $\pi_1$ and $\pi_2$ are far apart in the transposition metric.

The main goal of this paper is to extend the pairwise approximate orthogonality of
\eq{pairwise} to a certain notion of global approximate orthogonality.  In particular we
will show that the $P_d(\pi)$ are close to an orthonormal basis.  In general, a
collection of vectors with pairwise small inner products does not have to be close to an
orthonormal basis, as we will discuss further in \Cref{app:partitions}.  The fact that
the $P_d(\pi)$ matrices are close to an orthonormal basis will be the key property of them
that we use in most of our applications.

Define the $n!\times
n!$ Gram matrix $G^{(n,d)}$ by
\be G^{(n,d)}_{\pi_1,\pi_2} = \langle P_d(\pi_1),P_d(\pi_2)\rangle %= d^{c(\pi_1^{-1}\pi_2)-n}
= d^{-|\pi_1^{-1}\pi_2|},\ee
Observe that $G^{(n,d)}$ has ones on the
diagonal, and positive powers of $1/d$ in every off-diagonal entry.
Thus we have \be \lim_{d\ra \infty} G^{(n,d)} = I_{n!},\ee corresponding
to the fact that different permutations approach orthogonality as
$d\ra\infty$.

To make this fact useful, we need to know how quickly this limit
converges as a function of $n$.  Naively, we can observe that there
are $n!-1$ off-diagonal terms per row, each $\leq 1/d$, so they add up
to something small if $d \gg n!$.   But much better bounds are
possible.

\pagebreak[2]

\begin{lem}[approximate orthogonality]\label{lem:bound-eig}~\\
\benum
\item  $G^{(n,d)}$ is always positive semidefinite, has trace $n!$, and is invertible
if and only if $n \leq d$.
\item
\be \frac{1}{n!}\|G^{(n,d)} - I_{n!}\|_1 \leq \sqrt{2}\frac{n}{d}.
\label{eq:G-S1-bound}\ee
\item
\begin{subequations}\label{eq:G-extremal}\ba
\lambda_{\min}(G^{(n,d)})  &= \prod_{j=1}^{n-1} \L(1-\frac{j}{d}\R)
\geq 1-\frac{n(n-1)}{2d} \\
\lambda_{\max}(G^{(n,d)})  &= \prod_{j=1}^{n-1} \L(1+\frac{j}{d}\R)
\leq e^{\frac{n(n-1)}{2d}} \label{eq:lambda-max-G}
\ea\end{subequations}
Our applications will mostly rely on the fact the following simplified bounds:
\ba \|G^{(n,d)} - I_{n!}\|_\infty  & \leq \frac{n^2}{d}
\qquad \text{if }n^2\leq d \label{eq:G-bound} \\
\|G^{(n,d)} - I_{n!}\|_{1\ra 1} & \leq e^{\frac{n^2}{2d}}-1 \label{eq:G-1-1},
\ea
where the $1\ra 1$ norm of a matrix means the maximum sum of absolute values of entries of any row.
\eenum
\end{lem}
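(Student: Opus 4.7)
The plan is to exploit the left-translation invariance of $G^{(n,d)}$ (its $\pi_1,\pi_2$ entry depends only on $\pi_1^{-1}\pi_2$) to reduce the entire eigenvalue problem to representation theory of $S_n$, and then handle the corollary bounds as short calculations. Part (1) comes for free: $G$ is PSD because it is the Gram matrix of the vectors $P_d(\pi)/\sqrt{d^n}$ under the given inner product, and $\operatorname{tr} G = n!$ because every diagonal entry is $d^{-|e|}=1$. The invertibility criterion will drop out of part (3).

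For part (3) I view $G$ as the matrix of right multiplication by $F := \sum_{\pi} d^{-|\pi|}\pi \in \mathbb{C}[S_n]$ on the regular representation in the standard basis. Since $|\pi|$ is a class function, $F$ is central, and by Schur's lemma it acts by a scalar $c_\lambda$ on each irreducible $V_\lambda$; the eigenvalues of $G$ are therefore the $c_\lambda$, each with multiplicity $(\dim V_\lambda)^2$. To compute $c_\lambda=\chi_\lambda(F)/\dim V_\lambda$, I use the Schur--Weyl identity $d^{c(\pi)} = \sum_\mu s_\mu(1^d)\chi_\mu(\pi)$ together with character orthogonality to get $\sum_\pi d^{c(\pi)}\chi_\lambda(\pi) = n!\,s_\lambda(1^d)$, and then the hook-content and hook-length formulas give the clean expression
\be c_\lambda = \prod_{(i,j)\in\lambda}\L(1+\frac{j-i}{d}\R). \ee
To identify $(n)$ and $(1^n)$ as extremes I use an exchange argument: for $\lambda\neq(n)$, pick the removable corner $(k,\lambda_k)$ with $k\geq 2$ (content $\lambda_k-k$) and move it to position $(1,\lambda_1+1)$ (content $\lambda_1$); since $\lambda_1\geq \lambda_k$ and $k\geq 2$, the product strictly grows, and $\sum_i(i-1)\lambda_i$ strictly drops, so iteration terminates at $(n)$. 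The dual argument handles $(1^n)$. Invertibility iff $d\geq n$ then follows because $\lambda_{\min}=\prod_{j=1}^{n-1}(1-j/d)$ vanishes exactly when $d\leq n-1$.

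The simplified bounds are now short. For the operator-norm estimate I combine $\lambda_{\max}-1\leq e^{n(n-1)/(2d)}-1$ with $1-\lambda_{\min}\leq \sum_{j=1}^{n-1}j/d = n(n-1)/(2d)$ (the latter via the elementary inequality $\prod(1-a_j)\geq 1-\sum a_j$ for $a_j\in[0,1]$), and use $e^x-1\leq 2x$ for $x\leq\ln 2$, which applies when $n^2\leq d$. For the $1\to 1$ norm, by translation invariance
\be \|G-I\|_{1\to 1} \;=\; \sum_{\sigma\neq e} d^{-|\sigma|} \;=\; \prod_{j=0}^{n-1}\L(1+\frac{j}{d}\R)-1 \;\leq\; e^{n(n-1)/(2d)}-1 \;\leq\; e^{n^2/(2d)}-1. \ee

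For part (2) I use Cauchy--Schwarz on singular values: $\|A\|_1\leq \sqrt{\operatorname{rank}(A)}\,\|A\|_2 \leq \sqrt{n!}\,\|A\|_2$. Since $\operatorname{tr} G = n!$, I have $\|G-I\|_2^2 = \operatorname{tr}(G^2) - n!$, and by translation invariance $\operatorname{tr}(G^2) = n!\sum_{\sigma}d^{-2|\sigma|}$. Using the classical generating function $\sum_{\sigma\in S_n}x^{c(\sigma)}=\prod_{k=0}^{n-1}(x+k)$ with $x=d^2$ gives $\sum_\sigma d^{-2|\sigma|}=\prod_{k=0}^{n-1}(1+k/d^2)$, hence
\be \frac{1}{n!}\|G-I\|_1 \;\leq\; \sqrt{\prod_{k=1}^{n-1}\L(1+\frac{k}{d^2}\R)-1} \;\leq\; \sqrt{e^{n(n-1)/(2d^2)}-1} \;\leq\; \sqrt{2}\,\frac{n}{d}, \ee
where the last step uses $e^x-1\leq 2x$ for small $x$ and $n(n-1)\leq n^2$. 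The main obstacle is the extremal-eigenvalue step: the representation-theoretic computation of $c_\lambda$ is routine, but pinpointing $(n)$ and $(1^n)$ as the extrema requires the monotonicity argument above (or a term-by-term content comparison); everything else is bookkeeping on top of standard symmetric-function identities.
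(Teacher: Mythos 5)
Your proof is correct, and it departs from the paper in a few genuinely different ways worth noting. For the exact spectrum (part 3), the paper proves \lemref{exact-eig} in Appendix~A by building the Schur-basis vectors $\ket{v_\pi}$, factoring $G = KK^\dag$, and then reading off the spectrum of $K^\dag K$ via Schur duality. You instead observe that $G$ is the matrix of right multiplication by the \emph{central} group-algebra element $F=\sum_\pi d^{-|\pi|}\pi$ (central since $|\cdot|$ is a class function), so its eigenvalues are the normalized characters $\chi_\lambda(F)/\dim V_\lambda$ with multiplicities $(\dim V_\lambda)^2$ — a clean and arguably more self-contained route, though both arguments ultimately invoke Schur--Weyl to identify $c_\lambda$ with $n!\, s_\lambda(1^d)/(d^n\dim V_\lambda)$. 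You also spell out the identification of $(n)$ and $(1^n)$ as the extremal partitions via an exchange argument on contents; the paper treats this step as immediate. The exchange argument is valid: all factors $1+(j-i)/d$ are strictly positive for $\lambda\in\Par(n,d)$ (contents are bounded below by $1-d$), so the product comparison is sound, and your potential function terminates the iteration. (You leave the ``dual argument'' for $(1^n)$ to the reader; for the record, moving the removable corner of smallest row index to the addable corner at the bottom of column~1 strictly lowers the maximal content involved, and a mirror potential function terminates it.) For part 2, the paper does not give a proof and simply cites Lemma~6 of \cite{CHW07}; your Cauchy--Schwarz argument $\|G-I\|_1 \le \sqrt{n!}\,\|G-I\|_2$ together with $\tr G^2 = n!\prod_{k=0}^{n-1}(1+k/d^2)$ via the cycle-index generating function is a correct and self-contained alternative. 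Finally, note that the paper's inline proof of the simplified bounds \eqref{eq:G-bound}--\eqref{eq:G-1-1} is deliberately lighter-weight (it just computes the row sum and never needs the full spectrum), whereas your route derives them from the exact eigenvalue formula; this is heavier machinery for those particular bounds, but it buys you the exact formulas and the invertibility criterion essentially for free.
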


We see that there are a few different regimes.  If $n>d$, then $G^{(n,d)}$ is singular and
is far from $I_{n!}$.  Because of the qualitative difference between the $n>d$ and
$n \leq d$ regimes, the $n\leq d$ case is referred to as the ``stable range'' in the
context of Schur-Weyl duality.  If $n\leq O(d)$, then the {\em average} eigenvalue of
$G^{(n,d)}$ is close to 1, even though the top and bottom eigenvalues will be
exponentially large and exponentially close to zero respectively.  Finally, if
$n\leq O(\sqrt{d})$, then $G^{(n,d)}$ will be close to $I_{n!}$ in operator norm.

There are two proofs of \lemref{bound-eig}, both requiring some facts from representation
theory.  Using precise statements about the dimensions of irreps of $\cU_d$ and $\cS_n$,
we can calculate the exact formula for the eigenvalues of $G$ and their multiplicities.
We will do this below in \lemref{exact-eig}.  However, part 1 of \lemref{bound-eig} and \cref{eq:lambda-max-G,eq:G-bound}
can also be proved using only a few simple facts about the symmetric and antisymmetric
subspaces.  We give this proof here.

First we recall some facts about the symmetric and antisymmetric subspaces.
Define $\ssym d n$ to be the symmetric subspace of $(\bbC^d)^{\ot n}$,
meaning the set of vectors that is invariant under each $P_d(\pi)$.
We will also use the antisymmetric subspace $\wedge^n \bbC^d$, which
is the set of vectors invariant under each $P_d(\pi)\sgn(\pi)$, where
$\sgn(\pi)$ is defined to be the sign of $\pi$.  The dimensions of
these subspaces are known to be given by $\dim\ssym d n =
\binom{d+n-1}{n} =: \dsym n$ and $\dim\wedge^n \bbC^d =\binom d n$.
For readers unfamiliar with the properties of the symmetric subspace,
Ref.~\cite{Har-sym} gives a review from a quantum-information perspective.

\begin{proof}[Proof of parts 1 and 3 of \lemref{bound-eig}]
For part 1, we observe that $G$ is a Gram matrix, so is automatically
positive semi-definite.  It has dimension $n!$ and ones along its diagonal, so $G$ has trace $n!$.  It is invertible if and only if the matrices
$P_d(\pi)$ are linearly independent.  If $n\leq d$, then the linear
independence of these matrices can be seen by considering their action
on the state $\ket 1 \ot \ket 2 \ot \cdots \ot \ket n \in
(\bbC^d)^{\ot n}$.
To show that $G$ is singular when $n>d$, we define the vector
$\ket\zeta := \sqrt{\frac{d^n}{n!}}\sum_{\pi\in\cS_n} \sgn(\pi)
\ket\pi$.  Now calculate
\bas \bra\zeta G \ket
\zeta
&= \frac{1}{n!}\sum_{\pi_1, \pi_2} d^{c(\pi_1^{-1}\pi_2)} \sgn(\pi_1)\sgn(\pi_2)
\\& = \sum_{\pi\in\cS_n} d^{c(\pi)}\sgn(\pi)
\\& = \sum_{\pi\in\cS_n} \tr P_d(\pi)\sgn(\pi)
\\& = \dim \wedge^n\bbC^d = \binom{d}{n}.\eas
When $n>d$, this expression is 0.  Since $G$ is positive semidefinite,
it follows that it must have an eigenvalue equal to 0.

For part 3, we observe that the sum of the $\pi_1$ row of $G$ is
\begin{subequations}\label{eq:G-row}\ba
\sum_{\pi_2\in \cS_n} G_{\pi_1,\pi_2} & = 
  \sum_{\pi_2\in \cS_n} d^{c(\pi_1^{-1}\pi_2)-n} \\
&= \sum_{\pi\in \cS_n} d^{c(\pi)-n}\\
&= d^{-n}\sum_{\pi\in \cS_n} \tr P_d(\pi)\\
&= \frac{n!}{d^n}\dsym n = \frac{d+n-1!}{d!\cdot d^n} \\
& = \prod_{j=1}^{n-1} \L(1 + \frac{j}{d}\R)
\ea \end{subequations}
Finally, we use the inequality $1+x\leq e^x$ (which holds for all $x$)
to upper-bound the last equation with $e^{\frac{n(n-1)}{2d}}$.  This yields \eq{G-1-1} which implies \eq{G-bound} and in turn \eq{lambda-max-G}.

\end{proof}
\begin{remark}\label{rem:cayley}
An even simpler proof of a nearly equivalent bound was found by Kevin Zatloukal.  The
idea is that $|\cdot|$ describes a metric on a Cayley graph of degree
$\binom{n}{2}$.  Thus, there are at most $\binom{n}{2}^k$ permutations
with $|\pi|=k$, and we have
$$\sum_{\pi\in\cS_n} d^{-|\pi|} \leq
\sum_{k\geq 0} \binom{n}{2}^k d^{-k} = \L(1 - \frac{\binom{n}{2}}{d}\R)^{-1}.$$
\end{remark}

Most of the rest of the paper is devoted to applications of
\eq{G-bound}.  For our applications, we do not need any more precise
information about the distribution of eigenvalues.  However, for
completeness, we will describe the exact spectrum of $G^{(n,d)}$.
The answer turns out to involve the representation theory of the
symmetric and unitary groups.  
\begin{lem}\label{lem:exact-eig}
For each $\lambda\in \Par(n,d)$, $G^{(n,d)}$ has
  $\dim^2\cP_\lambda$ eigenvalues, each equal to
\be
\frac{n! \dim \cQ_\lambda^d}{\dim \cP_\lambda d^n}
=\prod_{(i,j)\in\lambda} \L( 1 + \frac{j-i}{d}\R).
\label{eq:exact-eig}\ee
\end{lem}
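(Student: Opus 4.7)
The plan is to realize $G^{(n,d)}$ as right-multiplication by a central element of the group algebra $\bbC[\cS_n]$, and then use Schur's lemma together with Schur--Weyl duality to read off the eigenvalues. Identifying $\bbC^{n!}$ with $\bbC[\cS_n]$ via the basis $\{\ket\pi\}_{\pi\in\cS_n}$, the identity $G^{(n,d)}_{\pi_1,\pi_2}=d^{c(\pi_1^{-1}\pi_2)-n}$ exhibits $G^{(n,d)}$ as the operator $R_z$ of right multiplication by
\[
z \;:=\; d^{-n}\sum_{\pi\in\cS_n} d^{c(\pi)}\,\pi \;=\; d^{-n}\sum_{\pi\in\cS_n} \tr(P_d(\pi))\,\pi.
\]
Because $c(\pi)$ is a class function, $z$ lies in the center of $\bbC[\cS_n]$. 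Under the bi-regular decomposition $\bbC[\cS_n]\cong\bigoplus_{\lambda\vdash n}\cP_\lambda\otimes\cP_\lambda^{*}$, Schur's lemma forces $z$ to act as a single scalar $c_\lambda$ on each isotypic block $\cP_\lambda\otimes\cP_\lambda^{*}$, so $G^{(n,d)}$ has eigenvalue $c_\lambda$ with multiplicity $(\dim\cP_\lambda)^2$.

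The scalar $c_\lambda$ is then obtained by taking the character of $z$ acting on $\cP_\lambda$:
\[
c_\lambda \;=\; \frac{\chi_\lambda(z)}{\dim\cP_\lambda} \;=\; \frac{d^{-n}}{\dim\cP_\lambda}\sum_{\pi\in\cS_n} d^{c(\pi)}\,\chi_\lambda(\pi).
\]
The other half of Schur--Weyl duality says that as an $\cS_n$-module, $(\bbC^d)^{\otimes n}\cong\bigoplus_{\mu\in\Par(n,d)}\cP_\mu\otimes\cQ_\mu^d$, whose character at $\pi$ is $\tr P_d(\pi)=d^{c(\pi)}$. Reading off the decomposition gives $d^{c(\pi)}=\sum_{\mu\in\Par(n,d)} \dim(\cQ_\mu^d)\,\chi_\mu(\pi)$; substituting and invoking character orthogonality $\sum_\pi\chi_\lambda(\pi)\chi_\mu(\pi)=n!\,\delta_{\lambda\mu}$ collapses the double sum to
\[
c_\lambda \;=\; \frac{n!\,\dim\cQ_\lambda^d}{\dim\cP_\lambda\cdot d^n},
\]
which is the first equality of \eq{exact-eig}. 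For $\lambda\vdash n$ with $\lambda\notin\Par(n,d)$ one has $\dim\cQ_\lambda^d=0$, so $c_\lambda=0$, consistent with the singularity of $G^{(n,d)}$ for $n>d$ established in part 1 of \lemref{bound-eig}.

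The product formula then follows from combining the Weyl dimension formula $\dim\cQ_\lambda^d=\prod_{(i,j)\in\lambda}(d+j-i)/h(i,j)$ with the hook-length formula $\dim\cP_\lambda=n!/\prod_{(i,j)\in\lambda}h(i,j)$: the hook lengths cancel out of the ratio, and since $\lambda$ has $n$ cells, the $d^n$ in the denominator distributes to give $\prod_{(i,j)\in\lambda}(1+(j-i)/d)$. The main conceptual obstacle is the multiplicity bookkeeping: one must keep in mind that a central element in the bi-regular representation acts as a single scalar on the full $(\dim\cP_\lambda)^2$-dimensional isotypic block rather than merely on a single copy of $\cP_\lambda$; equivalently, that $R_z=L_z$ whenever $z$ is central. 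Once this is in place, the rest reduces to the standard passage between characters of $\cS_n$ and dimensions of $\cU_d$-irreps afforded by Schur--Weyl duality.
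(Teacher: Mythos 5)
Your proof is correct, and it takes a genuinely different route from the one in \appref{rep-theory}. The paper realizes $G^{(n,d)}$ as a literal Gram matrix $G = K K^\dag$ with $K = \sum_\pi \ket{\pi}\bra{v_\pi}$, exploits isospectrality of $KK^\dag$ and $K^\dag K = \sum_\pi \proj{v_\pi}$, and then expands the latter in the Schur basis of $((\bbC^d)^{\ot n})^{\ot 2}$; all the content is the explicit Schur decomposition of $\ket{\Phi_d}^{\ot n}$, and no characters or group-algebra machinery appear. You instead identify $G^{(n,d)}$ with multiplication by the central element $z = d^{-n}\sum_\pi \tr(P_d(\pi))\,\pi$ of $\bbC[\cS_n]$, invoke Schur's lemma on the bi-regular module to get eigenvalue $\chi_\lambda(z)/\dim\cP_\lambda$ with multiplicity $(\dim\cP_\lambda)^2$, and evaluate the scalar via the Schur--Weyl character expansion of $\tr P_d(\cdot)$ and orthogonality. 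This is essentially the classical route of the references the paper cites (\cite{Novak08,ZJ10,BCHHKM11}); the paper's stated purpose in giving the Gram-matrix argument was to offer a quantum-information-flavored alternative. Your approach cleanly separates the multiplicity bookkeeping (immediate from centrality and the bi-regular decomposition) from the eigenvalue computation (a one-line character identity), and your observation that $\dim\cQ_\lambda^d = 0$ for partitions with more than $d$ rows directly explains the kernel in the $n>d$ regime. The paper's approach has the advantage of working purely with states and tensor-space decompositions, making it portable to Gram matrices of vectors $\ket{v_\pi}$ that are not group translates.
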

Here $\cQ_\lambda^d$ and $\cP_\lambda$ are irreps of $\cU_d$ and $\cS_n$ respectively; see \appref{rep-theory} for full details. From \lemref{exact-eig}, we immediately obtain Parts 1 and 3 of
\lemref{bound-eig}.  Part 2 is nontrivial, but was previously derived
in Lemma 6 of \cite{CHW07}.  That paper also gave asymptotically matching lower bounds on
$\|G-I\|_1$ that we will omit here.

\lemref{exact-eig} has been proven previously~\cite{Novak08,ZJ10,BCHHKM11} but using different techniques.  In \appref{rep-theory} we give a new proof using the terminology of quantum information, and based on the fact that $G$ is a Gram matrix.

\subsection{Related work}\label{sec:related}
As noted above, \lemref{exact-eig} has been previously proved, in several
places~\cite{collins06,Novak08,BCHHKM11,ZJ10}.  Ref.~\cite{BCHHKM11} used the same representation-theoretic
argument, pointing out that it can be applied to any representation of any group, while
\cite{Novak08,ZJ10} used properties of the symmetric group to obtain a simple, nearly
self-contained, calculation.

Define the {\em Weingarten matrix} $\Wg^{(n,d)}$ to be $(G^{(n,d)})^{-1}$.  The Weingarten
matrix was first introduced by Collins and \'Sniady~\cite{collins03,collins06}, although
they used a different normalization convention.  Their goals were to calculate matrix elements of
$\E[(U \ot \bar U)^{\ot n}]$ and to derive asymptotic properties such as freeness for
related families of random matrices.
The relevance of the Weingarten matrix can be seen from Cor 2.4 of Ref.~\cite{collins06},
which gives the following exact expression for this
expectation value:  
\be \bbE_U [U^{\ot n} \ot U^{*,\ot n} ]
=\sum_{\sigma,\tau\in\cS_n} (I \ot P_d(\sigma)) \Phi_d^{\ot n}
(I \ot P_d(\tau))^\dag\Wg^{(n,d)}(\sigma,\tau)  \ket{v_\sigma}\bra{v_\tau}
\label{eq:CS-unitary}.\ee
Following \cite{collins03}, note that $\Wg(\sigma,\tau)$ only depends on $\sigma^{-1}\tau$ so we can also denote this
matrix element by $\Wg(\sigma^{-1}\tau)$ and we can refer to $\Wg(\cdot)$ as the Weingarten function.  See \cite{CMN21} for an accessible recent review, and \cite{collins2022moment} for a discussion of applications.

Several papers have studied the asymptotic behavior of $\Wg$ as $d\ra \infty.$  Ref.~\cite{collins06} (in Cor 2.7) derived its leading order behavior:
\be \Wg^{(n,d)}(\sigma) = \text{Moeb}(\sigma)d^{-n-|\sigma|} +  O_n(d^{-n-|\sigma|-2}),
\label{eq:Wg-asym}\ee
where $\text{Moeb}(\sigma)$ is the M\"obius function. If $\sigma$ has $c_k$ cycles of
length $k$ and $C_k := 2k! / k!(k+1)!$ is the $k^{\text{th}}$ Catalan number then
$\text{Moeb}(\sigma) := \prod_k ((-1)^{k-1}C_k)^{c_{k+1}}$.
Since $\text{Moeb}(e)=1$, \eq{Wg-asym} is another way of saying that in the $d\ra\infty$
limit, $\Wg$  (or equivalently  $G$) approaches the identity matrix.   \eq{Wg-asym} does not address the question of how large $d$ needs to be as a function of $n$ in order for the approximation to be accurate.  Later works addressed this question, culminating in \cite{CM17} which showed that \eq{Wg-asym} is nearly sharp when $d \gg n^{7/4}$.  However, as a pointwise estimate on the entries of a rank-$n!$ matrix, this does not immediately imply bounds on the spectrum.

To get some intuition for \eq{CS-unitary} in the regime where
$\Wg\approx I$, we will compare with the case of Gaussian random matrices.
 Let $G$ be  a random complex $d\times d$ Gaussian
matrix whose entries are i.i.d.\ and satisfy $\bbE[G_{ij}]=0$ and $\bbE[|G_{ij}|^2]=1/d$.
The following formula is known as Wick's theorem (or Isserlis' theorem):
\be \bbE_G [G^{\ot n} \ot {G}^{*,\ot n} ] = \sum_{\pi \in \cS_n}
 (I \ot P_d(\pi)) \Phi_d^{\ot n}
(I \ot P_d(\pi))^\dag
\label{eq:Wick}\ee
This resembles \eq{CS-unitary} but with $\Wg$ replaced by the identity
matrix.  Thus \eq{CS-unitary} and the fac that $\Wg\approx I$ together imply that low
moments of unitary matrices are close to moments of complex Gaussian
matrices.  This can be seen as a generalization of the Poincar\'e-Maxwell-Borel Lemma which states that applying a low-dimensional projector to a uniformly random point in a high-dimensional sphere yields an approximately Gaussian distribution. 
Indeed $\Wg\approx I \approx G$ precisely in the regime
where submatrices of unitary matrices look Gaussian. Similar
observations were made earlier by Novak~\cite{Novak07} and
Matsumoto~\cite{Matsumoto13}.
We explore this point further in
\Cref{sec:states,sec:bosons}.

% \section{Mathematical consequences}\label{sec:math}

\section{Spectra and norms of sums of permutations}\label{sec:norms}
One easy consequence of our main result (\lemref{bound-eig}) is that we can control various norms of sums of permutations.
Suppose that $\eps = n^2/d \leq 1$ and consider some operator $A =
\sum_\pi a_\pi P_d(\pi)$ with $a_\pi\in\bbR$.   We would like to estimate various norms of
$A$.

  The 2-norm is:
\be \frac{\tr A^2}{d^n}  = \langle a, Ga\rangle,\ee
which is $\in[1-\eps/2,e^{\eps/2}]\|a\|_2^2$.  This follows directly from the operator inequalities $(1-\eps/2)I\leq G\leq e^{\eps/2}I$.

To bound the $\infty$-norm of $A$, let $\pi= \arg\max_\pi |a_\pi|$.  Then
\be \frac{|\tr [P_d(\pi)^\dag A]|}{d^n} = \left| \sum_{\sigma\in \cS_n}
  a_{\sigma} G_{\pi,\sigma} \right|  \geq |a_\pi|
   \L(1-\sum_{\sigma\neq\pi}G_{\pi,\sigma} \R)
  \geq \|a\|_\infty(1-\eps),\ee
where the last inequality follows from \eq{G-1-1}.
  Using $\langle A,B\rangle \leq \|A\|_\infty \|B\|_1$ with $B =
P_d(\pi)/d^n$, we obtain
\be \|A\|_\infty \geq (1-\eps/2)\|a\|_\infty .\label{eq:op-norm}\ee
On the other hand, the only obvious upper bound is the trivial $\|A\|_\infty \leq
\|a\|_1$.  This is tight when the $a_\pi$ all have the same sign, or when
the $a_\pi\sgn(\pi)$ do.
 Similarly we obtain
\be \|A\|_1 \geq (1-\eps) d^n \|a\|_\infty.\label{eq:S1-lb} \ee

This method does not seem to yield good bounds on the 1-norm of $A$.
The triangle inequality yields the rather weak bound $\|A\|_1\leq d^n \|a\|_1$ which is
usually improved upon by $\|A\|_1 \leq \sqrt{d^n}\|A\|_2 \leq d^n \sqrt{1+\eps} \|a\|_2$.

\section{Random maximally entangled states}\label{sec:states}

Random pure states are known to be nearly maximally entangled.  In this
section we describe one way to formalize this intuition, by proving
that the low moments of random bipartite states resemble those of
random maximally entangled states.  Let $\ket\psi$ be a random unit
vector in $\bbC^{d^2}$, $U$ a Haar-uniform unitary in $\cU_d$ and
$\ket{\varphi_U} := (U \ot I)\ket{\Phi_d}$.  We also use the
convention that $\psi:=\proj\psi$.  Then (using the
representation-theory notation from \Cref{sec:rep-theory}) we can
calculate:

\ba \bbE_\psi [\psi^{\ot n}] &=
\frac{\sum_{\pi\in\cS_n} P_{d^2}(\pi)}{d^2\cdots (d^2+n-1)}=
\frac{1}{\binom{d^2+n-1}{n}}
\sum_{\lambda\in\Par(n,d)}
\proj{\lambda,\lambda} \ot I_{\cQ_\lambda^d}^{\ot 2} \ot
\Phi_{\cP_\lambda}.\\
\bbE_U [\varphi_U^{\ot n}] & =
\sum_{\lambda\in\Par(n,d)}
\frac{\dim \cP_\lambda}{\dim\cQ_\lambda^d\cdot d^n}
\proj{\lambda,\lambda} \ot I_{\cQ_\lambda^d}^{\ot 2} \ot
\Phi_{\cP_\lambda}.
\ea
The ratio between these coefficients, for a fixed $\lambda$, is the
now-familiar
\be \frac{\dim\cQ_\lambda^d n! d^n}{\dim\cP_\lambda\, d^2\cdots
  (d^2+n-1)}
=\prod_{k=1}^{n-1}\L(1+\frac{k}{d^2}\R)^{-1}\cdot
\prod_{(i,j)\in\lambda} \L( 1 + \frac{j-i}{d}\R).\ee
This is again $\geq 1-n^2/2d$ and $\leq 1+n^2/d$, assuming $n^2\leq
d$.

As an example, suppose that $d=d_Ad_B$ corresponding to a decomposition into two systems $A$ and $B$.  Suppose further that we want to estimate the bipartite entanglement between these systems by examining $\tr[(\tr_A\varphi_U)^2]$.  Then $\bbE[\tr[(\tr_A\varphi_U)^2]]$ can replaced by the corresponding average over $\psi$ at a cost of multiplying by a number in the range $[1-\frac 1d,1+\frac 4d]$.  Similar bounds apply to higher moments of $\tr[(\tr_A\varphi_U)^2]$ or to related quantities.

What if instead of applying a random $U^{\ot n}$ to her half of a
maximally entangled state, Alice applies a random $P_d(\pi)$?  This
results in an ensemble that is in a way dual to the uniform
distribution over maximally entangled states.
Define
$\ket{\psi_\pi} = (P_d(\pi)\ot I_{d^n})\ket{\Phi_d}^{\ot  n}$.
Then
$$\bbE_{\pi\in\cS_n} \psi_\pi=
\sum_{\lambda\in\Par(n,d)}
\frac{\dim \cQ_\lambda}{\dim\cP_\lambda^d\cdot d^n}
\proj{\lambda,\lambda} \ot \Phi_{\cQ_\lambda^d} \ot
I_{\cP_\lambda}^{\ot 2}.
$$
The eigenvalues should look familiar; the $\ket{\psi_\pi}$ are none
other than the $\ket{v_\pi}$ defined in \Cref{sec:rep-theory}, and the
resulting state is $\frac{1}{n!}(K^{(n,d)})^\dag K^{(n,d)}$, which we have argued
is isospectral to $G^{(n,d)}/n!$.

A similar duality between $\cS_n$ and $\cU_d$ twirling was explored by
Mitchison in the context of generalized de Finetti
theorems~\cite{dual-dF}.

\section{Boson sampling anticoncentration}\label{sec:bosons}

Boson Sampling is the process of sending $n$ photons through an array
of beam-splitters that couple $m$ optical modes and then measuring
each mode.  It was introduced as a computational task in \cite{AA13}
and is significant because it appears to not be universal for quantum
computing while remaining hard to simulate classically, assuming some
plausible conjectures.
This gives a plausible route to quantum computational supremacy using current technology; 
see \cite{photon-supremacy} for a recent demonstration.

To understand the output distribution of Boson Sampling, first observe
that the beam-splitters define a unitary $U\in \cU_m$, often taken to
be Haar random.  Suppose the $n\ll m$ photons are input into $n$ modes
corresponding to a set $T\subset [m]$. Then the probability of finding
them into $n$ output modes $S\subset [m]$ is
\be \Pr[S] = |\Per(U_{S,T})|^2, \label{eq:Pr-Per}\ee
where $U_{S,T}$ denotes the
submatrix of $U$ with rows corresponding to $S$ and columns
corresponding to $T$. (There is also a $O(n^2/m)$ probability of
finding two or more photons in the same mode. In this case $S$
becomes a multiset, we interpret $U_{S,T}$ to allow repeated rows, and
the RHS of \Cref{eq:Pr-Per} is divided by $s_1!\ldots s_m!$ where
$s_i$ is the number of photons in mode $i$.  We avoid considering this
case by choosing $n \ll \sqrt m$.)
 Recall that the permanent of a matrix $V$ is
\be \Per(V) = \sum_{\pi\in S_n} \prod_{i=1}^n V_{i,\pi(i)}.\ee

Several steps in the analysis of Boson Sampling are simplified by
approximating the submatrices $V := U_{S,T}$ by a Gaussian matrix
$X$. 
We define $X$ to be an $n\times n$ matrix of i.i.d.~complex Gaussians such that
\be \begin{split}
\bbE[X_{i,j}] & = \bbE[V_{i,j}] = \bbE[X_{i,j}^2] = \bbE[V_{i,j}^2] =0 \\
\bbE[|X_{i,j}|^2] & = \bbE[|V_{i,j}|^2] = \frac 1m
\end{split}\ee
By definition these moments of $V$ and $X$ match, but what about
higher moments?  In \Cref{sec:related} we argued that higher moments
are close as well in the regime where $G\approx I$.  In this section
we will show how this implies that low moments of the permanent are
also close.  Note that the notation in this section is chosen to be
consistent with the boson sampling literature and $(n,m)$ here will
turn out to correspond to $(n,d)$ in the rest of the paper.

\begin{thm}\label{thm:boson}
  If $n^2t^2\leq 2m$ and $V,X$ are defined as above then
\be 1-\frac{n^2t^2}{m} 
\leq \frac{\bbE[|\Per(V)|^{2t}]}{\bbE[|\Per(X)|^{2t}]}
\leq 1 + \frac{n^2t^2}{m}.\label{eq:perm-moment-bound}\ee
\end{thm}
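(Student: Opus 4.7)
The plan is to reduce the ratio to a Rayleigh quotient of the Weingarten matrix $\Wg^{(nt,m)}=(G^{(nt,m)})^{-1}$, to which the eigenvalue bounds of \lemref{bound-eig} can be directly applied. The first step is to obtain a clean operator expression for $|\Per(V)|^{2t}$. Setting $A:=\sum_{\pi\in\cS_n}P_m(\pi)$, a short computation using $V_{i,j}=U_{S_i,T_j}$ gives $\Per(V)=\langle S|U^{\otimes n}A|T\rangle$. Raising to the $t$-th power and exploiting $[U^{\otimes n},A]=0$ together with the projector-like identity $A^2=n!\,A$, I can symmetrize so that a factor of $A^{\otimes t}$ sits on each side, giving $\Per(V)^t=(n!)^{-t}\langle\tilde S|U^{\otimes nt}|\tilde T\rangle$, where $|\tilde S\rangle := A^{\otimes t}|S\rangle^{\otimes t}$ and $|\tilde T\rangle := A^{\otimes t}|T\rangle^{\otimes t}$. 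Since $\tilde S,\tilde T$ are real, this yields
\[
|\Per(V)|^{2t}=(n!)^{-2t}\,\langle\tilde S|^{\otimes 2}\,(U^{\otimes nt}\otimes U^{*,\otimes nt})\,|\tilde T\rangle^{\otimes 2}.
\]

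Next I would apply the Collins--Sniady formula \eq{CS-unitary} and Wick's theorem \eq{Wick}, both with $N=nt$. Evaluating the resulting inner products against $|v_\sigma\rangle=(I\otimes P_m(\sigma))|\Phi_m\rangle^{\otimes nt}$ via the identity $\langle\alpha|_L\langle\beta|_R|\Phi_m\rangle^{\otimes nt} = m^{-nt/2}\langle\alpha|\beta\rangle$ (for real $\alpha,\beta$), and defining $a_\sigma := m^{-nt/2}\langle\tilde S|P_m(\sigma)|\tilde S\rangle$ and $b_\sigma := m^{-nt/2}\langle\tilde T|P_m(\sigma)|\tilde T\rangle$, the two expectations become
\[
\bbE[|\Per(V)|^{2t}] = (n!)^{-2t}\!\!\sum_{\sigma,\tau\in\cS_{nt}}\!\! a_\sigma\,\Wg^{(nt,m)}(\sigma,\tau)\,b_\tau,\qquad \bbE[|\Per(X)|^{2t}] = (n!)^{-2t}\!\!\sum_{\sigma\in\cS_{nt}}\!\! a_\sigma b_\sigma.
\]
The crucial observation is that $\langle\tilde S|P_m(\sigma)|\tilde S\rangle$ counts ordered pairs $(\vec\rho,\vec\rho')\in\cS_n^t\times\cS_n^t$ of block-by-block permutations of $S$ related by $\sigma$, and this count depends only on $\sigma$, $n$, $t$ --- not on the specific labels $S_1,\dots,S_n$, as long as they are distinct elements of $[m]$. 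Therefore $a_\sigma = b_\sigma =: c_\sigma$, and the ratio collapses to the Rayleigh quotient
\[
\frac{\bbE[|\Per(V)|^{2t}]}{\bbE[|\Per(X)|^{2t}]} = \frac{\langle c,\Wg^{(nt,m)}c\rangle}{\langle c,c\rangle}\in\bigl[\lambda_{\min}(\Wg^{(nt,m)}),\,\lambda_{\max}(\Wg^{(nt,m)})\bigr].
\]

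Finally I would invoke \lemref{bound-eig}(3) with $(n,d)\mapsto(nt,m)$ to bound $\lambda_{\min}(G^{(nt,m)})\geq 1 - nt(nt-1)/(2m)$ and $\lambda_{\max}(G^{(nt,m)})\leq e^{nt(nt-1)/(2m)}$, which invert to yield the needed bounds on $\lambda_{\min}(\Wg^{(nt,m)})$ and $\lambda_{\max}(\Wg^{(nt,m)})$. The hypothesis $n^2t^2\leq 2m$ then feeds into elementary inequalities (of the flavor $e^{-x}\geq 1-x$ and $e^x\leq 1+2x$ for $x\leq 1$) to produce the advertised window $1\pm n^2t^2/m$. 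The main obstacle I expect is establishing the symmetrization identity $a_\sigma=b_\sigma$; once the structural trick of inserting a second factor of $A^{\otimes t}$ (using $A^2=n!A$) is in place, the combinatorial content of the identity is that the permutation moments of $|\tilde S\rangle$ see only the $\cS_n^t$-module structure rather than the actual labels in $S$, and the rest is direct application of \lemref{bound-eig}.
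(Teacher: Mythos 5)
Your proposal is correct and follows essentially the same route as the paper: rewrite both permanent moments as bilinear forms in $\bbE[U^{\ot nt}\ot U^{*,\ot nt}]$, evaluate via the Collins--\'Sniady formula \eq{CS-unitary} and Wick's theorem \eq{Wick} so that the ratio becomes a Rayleigh quotient of $\Wg^{(nt,m)}$, and close with the eigenvalue bounds of \lemref{bound-eig}. The paper avoids your $A^2=n!A$ symmetrization and the $a_\sigma=b_\sigma$ counting argument by writing the permanent symmetrically as $\Per(V)\propto\bra{S_n}V^{\ot n}\ket{S_n}$ with the single vector $\ket{S_n}=\frac{1}{n!}\sum_{\pi}\ket\pi$ on both sides (using that Haar and Gaussian invariance allow one to take $S=T$ without loss of generality), so the Rayleigh-quotient structure with one coefficient vector appears immediately.
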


%Let $U\in\cU_m$ be a Haar random unitary and let $V$ denote an $n\times n$ submatrix, say from the upper-left corner (by symmetry, they are all the same).

Section 5.1 of \cite{AA13} establishes a similar but incomparable result,
finding that the distribution of unitary submatrices of size $m^{1/6}$ are close in
variational distance to an i.i.d. Gaussian distribution.
\Cref{thm:boson} by contrast works for submatrices with dimension as large as
$O(m^{1/2})$ but controls only low moments and not the entire
distribution.  However, for some applications, such as the
``anticoncentration'' conjecture, this can be enough.

Nezami~\cite{nezami21} used representation theory to give
formulas for the moments of both $|\Per(X)|^2$ and
$|\Per(V)|^2$.  These can be used to establish bounds similar to
\eq{perm-moment-bound} but slightly stronger\footnote{Specifically
  eqns (9), (13) and (19) from \cite{nezami21}, along with the fact
  that $\rho_\lambda(RCRC)\geq 0$, directly imply that
  \be
1 - \frac{n^2t^2}{2m}\approx \prod_{i=0}^{nt-1}\left(1 + \frac im\right)^{-1}
  \leq \frac{\bbE[|\Per(V)|^{2t}]}{\bbE[|\Per(X)|^{2t}]} \leq \prod_{i=1}^{\min(n,t)}\prod_{j=1}^{\max(n,t)} \left(1+\frac{j-i}m\right)^{-1}
\approx 1 - \frac{nt|n-t|}{m}
\label{eq:nezami}  \ee
This observation is due to Sepehr Nezami.
}.
The contribution of this work then is an independent and somewhat simpler proof of a nearly equivalent result.

\begin{proof}
  
Define
\be \ket{S_n} = \frac{1}{n!} \sum_{\pi \in S_n} \ket{\pi} \qquad \text{where}\qquad
\ket{\pi} = \ket{\pi(1)} \ot \cdots \ot \ket{\pi(n)}\in(\bbC^n)^{\ot n}.\ee
Then $\Per(V) = \bra{S_n} V^{\ot n} \ket{S_n}$.
Since the distribution of $V$ is invariant under $V\mapsto e^{i\phi} V$, $\bbE[\Per(V)]=0$ and similarly for $X$. Thus we will focus instead on
\be \bbE[|\Per(V)|^{2t}] = \bbE[\bra{S_n}^{\ot 2t} V^{\ot nt} \ot {V}^{*,\ot nt} \ket{S_n}^{\ot 2t}].
\label{eq:perm-V2t}\ee
If we interpret $\ket{S_n}$ as being a vector in $(\bbC^m)^{\ot n}$ then we can replace $V$ with $U$ in the RHS of \eq{perm-V2t}, obtaining
\be \bbE[|\Per(V)|^{2t}] = \bbE[\bra{S_n}^{\ot 2t} U^{\ot nt} \ot {U}^{*,\ot nt} \ket{S_n}^{\ot 2t}].
\label{eq:perm-V2t-U}\ee
Now apply \eq{CS-unitary} to evaluate the expectation over $U$ and obtain
\be \bbE[|\Per(V)|^{2t}] = \bra{S_n}^{\ot 2t}
\sum_{\sigma,\tau \in \cS_{nt}}
(I \ot P_m(\sigma)) \Phi_m^{\ot nt}
(I \ot P_m(\tau))^\dag\Wg^{(nt,m)}(\sigma,\tau)
\ket{S_n}^{\ot 2t}.\label{eq:perm-Wg}\ee
By contrast, for the moments of a complex Gaussian,
\be \bbE[|\Per(X)|^{2t}] = \bra{S_n}^{\ot 2t}
\sum_{\pi \in \cS_{nt}}
(I \ot P_m(\pi)) \Phi_m^{\ot nt}
(I \ot P_m(\pi))^\dag
\ket{S_n}^{\ot 2t}.\label{eq:perm-Gaussian}\ee
For $\pi\in\cS_{nt}$ we need to evaluate
\be \alpha_\pi := \bra{\Phi_m}^{\ot nt}(I \ot P_m(\pi))\ket{S_n}^{\ot 2t}
= \frac{1}{\sqrt{m^{nt}}} \bra{S_n}^{\ot t} P_m(\pi)\ket{S_n}^{\ot t}
= \frac{1}{n!^t\sqrt{m^{nt}}} \sum_{\sigma,\sigma' \in \cS_n^t}
1_{\sigma' = \pi\sigma}
\label{eq:perm-overlap}.\ee

Before evaluating $\alpha_\pi$, we can make some observations about the moments of the permanent.  Substituting into \eq{perm-Wg} and \eq{perm-Gaussian} we have
\be \bbE[|\Per(V)|^{2t}] = \bra{\alpha}\Wg^{(nt,m)}\ket{\alpha}
\qand \bbE[|\Per(X)|^{2t}] = \braket{\alpha}{\alpha}.\ee
In the $n^2t^2 \ll m$ regime, our control of the spectrum of $\Wg$
lets us relate these quantities.  Indeed
\be 1-\frac{n^2t^2}{m}
\leq \lambda_{\max}(G^{(nt,m)})^{-1}
\leq \frac{\bbE[|\Per(V)|^{2t}]}{\bbE[|\Per(G)|^{2t}]}
\leq \lambda_{\min}(G^{(nt,m)})^{-1}
\leq 1 + \frac{n^2t^2}{m},\ee
where the outer inequalities are valid when $n^2t^2\leq 2m$.  
\end{proof}

We can also use these formulas to calculate some moments of Gaussian
matrices.  It is not trivial since $\alpha_\pi$ will depend on $\pi$
for $t>1$.  However, the case $t=2$ is relatively quick.  Then
\eq{perm-overlap} will depend on the parameter $\ell = w(\pi) :=
|\pi(\{1,\ldots,n\}) \cap \{n+1, \ldots, 2n\}|$.  Let's fix
$\pi\in\cS_{2n}$, choose $\sigma\in\cS_n^2$ at random and calculate
the probability that $\pi\sigma\in\cS_n\times \cS_n$.  This is
$\binom{n}{\ell}^{-1}$.  Thus we find 
\be \alpha_\pi = \frac{1}{\binom{n}{w(\pi)}m^{nt/2}}.\ee

We also want to calculate $|w^{-1}(\ell)|$. This is given by a hypergeometric distribution.
\be |w^{-1}(\ell)| = \binom{n}{\ell}^4 \ell!^2 (n-\ell)!^2 = n!^2 \binom{n}{\ell}^2.\ee

To apply this to the Gaussian case, we substitute into\eq{perm-Gaussian}.
\be \bbE[|\Per(X)|^{4}] = \sum_{\ell}n!^2 \binom{n}{\ell}^2 \binom{n}{\ell}^{-2} m^{-2n}
=(n+1)\frac{n!^2}{m^{2n}}.\ee
This yields an alternate proof of  Lemma 56 of \cite{AA13}.

\section{Partial transposes of permutation operators}\label{sec:PT-perm}

This section will introduce some mathematical tools that will be
relevant to applications involving multipartite quantum systems and
specifically the proofs in \Cref{sec:hiding,sec:prod-purity}.

A
frequently used tool in understanding locality is the PPT (Positive
Partial Transpose) restriction~\cite{PPT1,PPT2}.  The PPT criteria for
seperability of states and measurements is useful in part because it
has an efficient semidefinite program and these same attributes also
make it more amenable to proofs.  In this section we study the
spectrum of permutation operators with the partial transpose applied
to some of the subsystems.  The goal is to establish lemmas that will
be later used in the applications.

If $\{M, I-M\}$ is a two-outcome measurement that can be implemented
by LOCC (local operations and classical communication~\cite{LOCC}),
then a useful relaxation is to require that $M$ and $I-M$ 
remain positive semi-definite whenever any collection of subsystems is
partially transposed.  We call the measurements satisfying this
condition ``PPT'', meaning that measurement operators are
{\em P}ositive under {\em P}artial {\em
  T}ransposition\footnote{In some cases one might want to constrain only the yes or no operators to being PPT.  In this paper we will always take PPT to mean that all measurement outcomes are PPT.}.
Let $M$ act on $n$ systems, and let $S\subseteq [n]$.  Then we let
$M^{\Gamma_S}$ denote $M$ with the indices in $S$ transposed.  In this
notation, an equivalent characterization of the PPT-BOTH condition is
that
\be 0\preceq M^{\Gamma_S}\preceq I \qquad\forall S\subseteq [n]
\label{eq:PPT-both-cond}\ee

In this section, we discuss partial transposes of the operators $P_d(\pi)$.

The relevance of the PPT constraint is that taking the partial
transpose of part of a permutation matrix can result in the largest
eigenvalue increasing dramatically.  
Thus, if $\Gamma$ denotes the
partial transpose, then requiring that $0\leq
M^\Gamma \leq I$ can be a potent constraint in addition to the usual
$0\leq M\leq I$.

We will consider taking the partial transpose of an arbitrary set
$S\subset [n]$ and will denote this operation $^{\Gamma_S}$.  We also
define $\bar{S} := [n] - S$, so that $(S,\bar{S})$ partition $[n]$.
\begin{lem}\label{lem:PPT}
For any $\pi\in\cS_n$, let $k = |S \cap \pi(\bar{S})|$.  Then
$P_d(\pi)^{\Gamma_S}$ has $d^{n-2k}$ non-zero singular values, each
equal to $d^k$.
\end{lem}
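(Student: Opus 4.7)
The plan is to reduce to the case where $\pi$ is a single cycle and then read off the SVD from the tensor-network structure of the partial transpose. First I would factor $\pi$ into cycles $C_1,\ldots,C_\ell$: since $P_d(\pi) = \bigotimes_i P_d(\pi|_{C_i})$ and partial transpose on $S$ respects this tensor product, $P_d(\pi)^{\Gamma_S} = \bigotimes_i P_d(\pi|_{C_i})^{\Gamma_{S \cap C_i}}$; and because $\pi$ preserves each cycle, $k = \sum_i k_{C_i}$ with $k_{C_i} := |(S \cap C_i) \cap \pi(\bar S \cap C_i)|$. Singular values of tensor products multiply, so it suffices to prove the claim for a single cycle $C$ of length $m$, with $k_C$ in place of $k$ and $m$ in place of $n$.

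For a single cycle, which after relabeling I take to be $1 \to 2 \to \cdots \to m \to 1$, the matrix elements of $M := P_d(\pi|_C)^{\Gamma_{S_C}}$ (with $S_C := S \cap C$) are products of $m$ Kronecker deltas, one per cyclic wire $r \to r+1$. Since the partial transpose swaps input and output indices at each position in $S_C$, each delta falls into one of four cases depending on which of $r, r+1$ lie in $S_C$: an input-output delta inside a monochromatic stretch (a ``map''), an input-input delta at each $S_C \to \bar S_C$ transition (a ``cap''), or an output-output delta at each $\bar S_C \to S_C$ transition (a ``cup''). Walking around the cycle gives exactly $k_C$ caps, $k_C$ cups, and $m - 2k_C$ maps.

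I would then view these $m$ deltas as edges of a graph on the $2m$ vertices $\{I_r, O_r\}_{r \in C}$. Tracing through which side of each wire is bent by the partial transpose shows that every vertex appears in exactly one delta, so the graph is a perfect matching. Parametrizing the non-zero matrix entries of $M$ by the component values ($\vec c \in [d]^{m-2k_C}$ for the maps, $\vec a \in [d]^{k_C}$ for the caps, $\vec b \in [d]^{k_C}$ for the cups) then gives
\[
M = \sum_{\vec c} |R_{\vec c}\rangle \langle L_{\vec c}|, \qquad |R_{\vec c}\rangle := \sum_{\vec b}|\vec j(\vec b, \vec c)\rangle, \quad |L_{\vec c}\rangle := \sum_{\vec a}|\vec i(\vec a, \vec c)\rangle.
\]
For $\vec c \neq \vec c'$ the vectors $|R_{\vec c}\rangle, |R_{\vec c'}\rangle$ are supported on disjoint standard basis elements (they differ at any map-component output slot where $\vec c, \vec c'$ disagree), and hence are orthogonal, with squared norm $d^{k_C}$ each; likewise for the $|L_{\vec c}\rangle$. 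After normalization this is an SVD of $M$ with $d^{m-2k_C}$ non-zero singular values, each equal to $d^{k_C}$. Tensoring across cycles then gives $d^{n-2k}$ singular values, each equal to $d^k$.

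The main obstacle is the combinatorial bookkeeping around the perfect-matching structure: carefully tracking the four delta cases and orientations, and verifying the counts of caps, cups, and maps. Once that combinatorial picture is established, the orthogonality argument that upgrades it to an SVD is routine.
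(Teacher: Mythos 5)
Your argument is correct, and it takes a genuinely different route from the paper's. The paper never decomposes $\pi$ into cycles: it instead computes $X=(P_d(\pi)^{\Gamma_S})^\dagger P_d(\pi)^{\Gamma_S}$ directly and identifies it, by partitioning $[n]$ into $\bar S\cap\pi(S)$, $(S\cap\pi(S))\cup(\bar S\cap\pi(\bar S))$, and $S\cap\pi(\bar S)$, as $d^{2k}$ times a tensor product of $k$ maximally entangled projectors (on the pairs $(i,\pi^{-1}(i))$ for $i\in S\cap\pi(\bar S)$) and $n-2k$ identities; the singular values then follow by taking the square root of the eigenvalues of $X$. You instead work with $M=P_d(\pi)^{\Gamma_S}$ itself and read an explicit SVD off the tensor-network (perfect-matching) structure, producing the left and right singular vectors, not just the spectrum — a slightly stronger output. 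Two small remarks: the reduction to a single cycle is not actually needed (the cap/cup/map count is exactly $k$, $k$, $n-2k$ for a general $\pi$, since caps correspond to $r\in\bar S\cap\pi(S)$, cups to $r\in S\cap\pi(\bar S)$, and maps to the remaining $r$, so you can run the perfect-matching argument globally); and your cap/cup labeling is swapped relative to what the delta structure actually gives (caps land in the bra when $r\in\bar S\cap\pi(S)$, cups in the ket when $r\in S\cap\pi(\bar S)$), and relatedly $\vec a$ and $\vec b$ are attached to the wrong side in the formula for $R_{\vec c}$ and $L_{\vec c}$, but since the cap and cup counts are equal this does not affect the conclusion.
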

This is a generalization of the well-known fact that $\cF_{1,2}^{\Gamma_2} =
d\Phi$, where $\Phi$ is a projection on the maximally entangled state.
In fact, we can say somewhat more about the structure of
$P_d(\pi)^{\Gamma_S}$ (see \cite{eggeling03, ZKW06}), but \lemref{PPT}
is all we need for our argument.

\begin{proof}[Proof of \lemref{PPT}]
Let $X=(P_d(\pi)^{\Gamma_S})^\dag P_d(\pi)^{\Gamma_S}
= P_d(\pi)^{\Gamma_{\bar{S}}} P_d(\pi)^{\Gamma_S}$.  Then the square of the
singular values of $P_d(\pi)^{\Gamma_S}$ are the eigenvalues of $X$.
To represent tensor products of $n$ systems, we will use a superscript
$^{(i)}$ to indicate that a system should be placed in the
$i^{\text{th}}$ position, so that we can list the systems in an order
that is more convenient.
We now calculate
\bas
X &= \sum_{x_1,\ldots, x_n\in [d]\atop y_1, \ldots, y_n \in [d]}
\bigotimes_{i\in S} \ket{x_i }\braket{x_{\pi(i)}}{y_{\pi(i)}}\bra{y_i}^{(i)}
\;\ot\;
\bigotimes_{i\in\bar{S}} \ket{x_{\pi(i)}} \braket{x_i}{y_i}\bra{y_{\pi(i)}}^{(i)}
  \\ & = \sum_{x_1,\ldots, x_n\in [d]\atop y_1, \ldots, y_n \in [d]}
\L( \prod_{i\in \pi(S)\cup \bar{S}} \delta_{x_i, y_i}\R)
\bigotimes_{i\in S} \ket{x_i }\bra{y_i}^{(i)}
\;\ot\;
\bigotimes_{i\in\bar{S}} \ket{x_{\pi(i)}} \bra{y_{\pi(i)}}^{(i)}
\\ & = \sum_{x_1,\ldots, x_n\in [d]\atop y_1, \ldots, y_n \in [d]}
\L(\prod_{i\in \pi(S)\cup \bar{S}} \delta_{x_i, y_i}\R)
\bigotimes_{i\in S} \ket{x_i }\bra{y_i}^{(i)}
\;\ot\;
\bigotimes_{i\in \pi(\bar{S})} \ket{x_i} \bra{y_i}^{(\pi^{-1}(i))}
\eas
We see that a $\delta_{x_i,y_i}$ appears for all $i$ in $\pi(S)\cup
\bar{S}$, or equivalently, all $i$ not contained in $\pi(\bar{S})\cap S$.
Additionally we see that each $\ket{x_i}\bra{y_i}$ appears zero times
for $i\in\bar{S}\cap \pi(S)$, twice for $i\in \pi(\bar{S})\cap S$ and once
otherwise; i.e.~for $i\in (S\cap \pi(S))\cup (\bar{S}\cap \pi(\bar{S}))$.   (To
justify these arguments, recall that $(S,\bar{S})$ and $(\pi(S),\pi(\bar{S}))$
both partition $[n]$.)

We now consider the partition of $[n]$ into
$\bar{S}\cap \pi(S)$, $(S\cap \pi(S))\cup (\bar{S}\cap \pi(\bar{S}))$ and $S\cap \pi(\bar{S})$
and determine the contributions from each.  Note that since $\pi$ is a
permutation, we have
$|\bar{S}\cap \pi(S)| = |S\cap \pi(\bar{S})| (=k)$,

\bit \item For $i\in\bar{S}\cap \pi(S)$, we have an appearance of
$\delta_{x_i,y_i}$, but not of $\ket{x_i}\bra{y_i}$.  Thus this term
contributes the scalar multiple $d$.
\item For $i\in (S\cap \pi(S))\cup(\bar{S}\cap \pi(\bar{S}))$, we have a $\delta_{x_i,y_i}$
  constraint as well as a $\ket{x_i}\bra{y_i}$ term.  Thus, we have
  one appearance of the $d\times d$ identity operator $I_d =
\sum_{x_i\in [d]} \proj{x_i}$ at position $i$.
\item Finally, for $i\in \pi(\bar{S})\cap S$, there is no
$x_i=y_i$ constraint and the total contribution is
$$\sum_{x_i,y_i \in [d]}
\ket{x_i}\bra{y_i}^{(i)} \ot \ket{x_i}\bra{y_i}^{\pi^{-1}(i)}
 = d \Phi^{(i, \pi^{-1}(i))}.$$
\eit
Together, we conclude that
$$X = d^{2k} \bigotimes_{i \in S\cap \pi(\bar{S})} \Phi^{(i, \pi^{-1}(i))}
 \;\ot\;
\bigotimes_{i \in  (S\cap \pi(S))\cup (\bar{S}\cap \pi^{-1}(\bar{S}))}
I_d^{(i)},$$
which has the claimed eigenvalues.
\end{proof}

Since our bounds are often in terms of $|\pi|$, it is convenient to
express \lemref{PPT} using this quantity.  This is possible because we
often are free to choose $S$ arbitrarily.  In some cases, we will need
to choose a single $S$ that works for multiple permutations.
This too is straightforward but yields a weaker bound.
\begin{lem}\label{lem:max-cut}
For any $\pi_1,\ldots,\pi_k\in\cS_n$ there exists $S\subseteq [n]$
such that
\be \sum_{i=1}^k |\pi_i(S)\cap \bar S| \geq
\frac{1}{4}\sum_{i=1}^k |\pi_i|.\label{eq:max-cut-ineq}\ee
In the special case where we have a single $\pi\in\cS_n$ we can find
$S$ such that
\be |\pi(S)\cap \bar S| \geq \frac{|\pi|}{2}.
\label{eq:max-cut-single}\ee
\end{lem}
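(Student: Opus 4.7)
My plan is to handle the two parts separately, since the single-permutation case admits a tighter bound than the one produced by the probabilistic argument for multiple permutations.

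For \eq{max-cut-ineq}, I would use a standard random-$S$ argument. Choose $S \subseteq [n]$ by including each element independently with probability $1/2$. Since each $\pi_i$ is a bijection, $|\pi_i(S) \cap \bar S| = |\{j \in S : \pi_i(j) \in \bar S\}|$. For any $j \in [n]$ with $\pi_i(j) \neq j$, the events $j \in S$ and $\pi_i(j) \in \bar S$ are independent and each has probability $1/2$, while if $\pi_i(j) = j$ the joint event is impossible. Hence $\bbE[|\pi_i(S) \cap \bar S|] = (n - f_i)/4$, where $f_i$ is the number of fixed points of $\pi_i$. Since fixed points are cycles of length $1$, $f_i \leq c(\pi_i) = n - |\pi_i|$, so $\bbE[|\pi_i(S) \cap \bar S|] \geq |\pi_i|/4$. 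Summing over $i$ and invoking linearity of expectation produces some $S$ satisfying \eq{max-cut-ineq}.

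For \eq{max-cut-single}, the probabilistic method loses a factor of $2$, so I instead construct $S$ directly from the cycle decomposition of $\pi$. Since each cycle of $\pi$ is $\pi$-invariant, the count $|\pi(S) \cap \bar S|$ decomposes as a sum of contributions from the cycles of $\pi$ considered separately. For a cycle of length $\ell$ whose vertices are ordered $v_1 \to v_2 \to \cdots \to v_\ell \to v_1$, I place $v_1, v_3, v_5, \ldots$ in $S$ and the remaining vertices of the cycle in $\bar S$. A short case analysis (even vs.\ odd $\ell$) shows this yields $\lfloor \ell/2 \rfloor \geq (\ell-1)/2$ crossings from the cycle, with the correct value $0$ when $\ell = 1$. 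Summing over cycles gives $|\pi(S) \cap \bar S| \geq \sum_{\text{cycles}} (\ell-1)/2 = (n - c(\pi))/2 = |\pi|/2$.

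The only subtlety to watch for is the asymmetric definition of $|\pi(S) \cap \bar S|$ as the elements of $\bar S$ that are the image of $S$; but since $\pi$ is a bijection this coincides with the directed cut size $|\{j \in S : \pi(j) \in \bar S\}|$, so both arguments go through without further obstacle. I do not expect any hard step: the probabilistic bound is immediate once one correctly identifies the fixed-point contribution, and the cycle-by-cycle construction is a direct combinatorial check.
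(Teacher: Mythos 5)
Your proposal is correct and follows essentially the same two-pronged strategy as the paper: a uniformly random $S$ plus linearity of expectation for the multi-permutation bound (with the non-fixed-point count bounded below by $|\pi_i|$; your $f_i\le c(\pi_i)$ observation is an equivalent rephrasing of the paper's cycle-type computation), and the greedy alternating cycle-by-cycle construction for the single-permutation case giving $\lfloor\ell/2\rfloor\ge(\ell-1)/2$ crossings per cycle.
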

\begin{proof}
Suppose $S$ is chosen uniformly at random from the subsets of $[n]$.
For each $i\in [k]$, let $m_i$ denote the number of derangements of
of $\pi_i$, i.e.~the number of $x$ such that $\pi_i(x)
\neq x$.   For such $x$, the probability that $x \in \pi_i(S) \cap \bar S$
is $1/4$.  By linearity of expectation, the expectation of
$|\pi_i(S)\cap \bar S|$ is $m_i/4$.  Now suppose that $\pi_i$ has
$c_1$ 1-cycles, $c_2$ 2-cycles, and so on.  Then since a single cycle
of length $j\geq 2$ has $j$ derangements,
\be m_i =n-c_1 = \sum_{j\geq 2} j c_j
\qand
|\pi_i| = \sum_{j\geq 2} (j-1)c_j.\ee
Together this implies that $m_i \geq |\pi_i|$.  Thus \eq{max-cut-ineq}
holds in expectation, and also therefore holds for at least one choice
of $S$.

For the $k=1$ case we will choose $S$ based on the cycle decomposition
of $\pi$.  For a cycle containing elements $x_1,x_2,\ldots,x_j$ we put
$x_1,x_3,x_5,\ldots$ into $S$.  A cycle of length $j$ then contributes
$\lfloor j/2\rfloor$ to $|\pi(S)\cap S|$.  Since
$\lfloor j/2\rfloor \geq (j-1)/2$ we obtain \eq{max-cut-single}.
\end{proof}

Let $M = \sum_\pi m_\pi P_d(\pi)$ satisfy the PPT condition
\eq{PPT-both-cond} and assume that $n\leq d^{1/2}$.
From the bound $\|M\|\leq 1$ and \cref{eq:op-norm} we have $|m_\pi|
\leq 1/(1-n^2/2d) \leq 1+\frac{n^2}{d}$.  Using the PPT condition and the
stronger condition $n \leq d^{1/4}$ we can show a
much stronger bound when $\pi$ is far from $e$.

\begin{lem}\label{lem:PPT-coeff}
If $M = \sum_\pi m_\pi P_d(\pi)$ satisfies \eq{PPT-both-cond} and
$\frac{n^2}{\sqrt{d}} \leq 1$ then
\be
|m_\pi| \leq \L(1+\frac{n^2}{\sqrt d}\R) d^{-|\pi|/2}  
\ee
\end{lem}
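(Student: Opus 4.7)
The plan is to exploit the PPT hypothesis on a strategically chosen partial transpose of $M$ to control the linear functional $(Gm)_{\pi^*} = \sum_\pi m_\pi G^{(n,d)}_{\pi^*,\pi}$ pointwise, and then bootstrap this into a pointwise bound on $m$ itself using the approximate orthogonality of the permutations.

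Fix a target permutation $\pi^*$ and apply the single-permutation case of \lemref{max-cut} to produce $S \subseteq [n]$ with $k^* := |S \cap \pi^*(\bar S)| \geq |\pi^*|/2$. By \lemref{PPT}, $P_d(\pi^*)^{\Gamma_S}$ has $d^{n-2k^*}$ nonzero singular values, each equal to $d^{k^*}$, so $\|P_d(\pi^*)^{\Gamma_S}\|_1 = d^{n-k^*}$. The PPT hypothesis \eqref{eq:PPT-both-cond} gives $\|M^{\Gamma_S}\|_\infty \leq 1$, and since partial transpose preserves the trace pairing $\tr[XY] = \tr[X^{\Gamma_S} Y^{\Gamma_S}]$, H\"older's inequality yields
$$\bigl|\tr\bigl[P_d(\pi^*)^\dag M\bigr]\bigr| = \bigl|\tr\bigl[(P_d(\pi^*)^{\Gamma_S})^\dag M^{\Gamma_S}\bigr]\bigr| \leq \|P_d(\pi^*)^{\Gamma_S}\|_1\,\|M^{\Gamma_S}\|_\infty \leq d^{n-k^*}.$$
Dividing by $d^n$ and recognizing the left-hand side as $d^n (Gm)_{\pi^*}$, we obtain $|(Gm)_{\pi^*}| \leq d^{-k^*} \leq d^{-|\pi^*|/2}$ for every choice of $\pi^*$.

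To convert this into a bound on $m_{\pi^*}$ itself, set $\mu := \sup_\pi d^{|\pi|/2}|m_\pi|$ and write $m_{\pi^*} = (Gm)_{\pi^*} - \sum_{\pi \neq \pi^*} G^{(n,d)}_{\pi^*,\pi}\, m_\pi$. Substituting $|m_\pi| \leq \mu d^{-|\pi|/2}$, applying the triangle inequality $|\pi^*| - |\pi| \leq |\pi^{*-1}\pi|$ to the exponents, and reparameterizing by $\tau := \pi^{*-1}\pi$, we obtain
$$d^{|\pi^*|/2} |m_{\pi^*}| \leq 1 + \mu \sum_{\tau \neq e} d^{-|\tau|/2}.$$
The tail sum is handled by Zatloukal's Cayley-graph count (\Cref{rem:cayley}) applied with $\sqrt{d}$ in place of $d$, bounding it by $\binom{n}{2}/\sqrt d \cdot (1 - \binom{n}{2}/\sqrt d)^{-1}$. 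Taking the supremum over $\pi^*$ and solving the resulting self-consistent inequality $\mu \leq 1 + \mu \cdot (\text{tail})$ then yields $\mu \leq 1 + n^2/\sqrt d$ under the hypothesis $n^2/\sqrt d \leq 1$.

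The trickiest step is the final bootstrap: hitting the constant exactly $1 + n^2/\sqrt d$ (rather than some larger multiple) demands careful bookkeeping of the Cayley-graph tail. If the geometric-series estimate proves lossy, one can stratify by $|\tau|$ and use the finer count $|\{\tau : |\tau| = k\}| \leq \binom{n}{2}^k$ at each level to tighten the bound to match the stated constant.
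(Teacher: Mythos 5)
Your proposal is, in structure, exactly the paper's argument: pick the critical permutation, apply \lemref{max-cut} to find $S$ with $|S\cap\pi(\bar S)|\geq|\pi|/2$, invoke \lemref{PPT} to control $\|P_d(\pi)^{\Gamma_S}\|_1$, pair against the PPT operator $M^{\Gamma_S}$ via H\"older and the trace-pairing identity $\tr[A^{\Gamma_S}B^{\Gamma_S}]=\tr[AB]$ to bound the linear functional $\langle M,P_d(\pi)\rangle$, and then bootstrap using approximate orthogonality and the triangle inequality on the transposition metric. The paper phrases this as an argmax over $\pi$ and a direct inequality chain; you phrase it as bounding $(Gm)_{\pi^*}$ uniformly and then solving a self-consistent inequality for $\mu=\sup_\pi d^{|\pi|/2}|m_\pi|$. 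These are the same proof.

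The one substantive difference is the estimate of the tail sum $\sum_{\tau\neq e}d^{-|\tau|/2}$. You use Zatloukal's Cayley-graph count (\Cref{rem:cayley}), bounding it by $\frac{\binom n 2/\sqrt d}{1-\binom n 2/\sqrt d}$, whereas the paper uses the exact row sum from \eqref{eq:G-row}, $\sum_\tau d^{-|\tau|/2}=\prod_{j=1}^{n-1}(1+j/\sqrt d)\leq e^{n^2/2\sqrt d}$, giving tail $\leq e^{n^2/2\sqrt d}-1$. The Zatloukal estimate is genuinely looser: at $\binom n 2/\sqrt d = 1/2$ (which is permitted when $n^2/\sqrt d\approx 1$) your denominator $1-2\binom n 2/\sqrt d$ in the solved inequality $\mu\leq(1-\binom n 2/\sqrt d)/(1-2\binom n 2/\sqrt d)$ collapses to zero, while the row-sum bound stays finite. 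So your argument closes only for $n^2/\sqrt d$ somewhat below $1$, and your hedge about ``careful bookkeeping'' is warranted: switching from the Cayley-graph count to the exact product formula is not merely tightening constants, it is needed to keep the bootstrap nondegenerate near the endpoint of the hypothesis. (As a minor aside, the paper's own final numerical step $2-e^{n^2/2\sqrt d}\geq 1/(1+n^2/\sqrt d)$ is itself slightly optimistic for $n^2/\sqrt d$ near $1$, so neither version lands exactly on the stated constant without a marginal adjustment; but your version fails earlier and more seriously.) Replace the Zatloukal tail estimate with the row-sum product and your proof matches the paper's.
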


\begin{proof}
Let
\be \pi := \arg\max_\pi |m_\pi| d^{|\pi|/2}
\label{eq:pi-arg-max}.\ee
Use \lemref{PPT} to choose $S$ so that
$|\pi(S)\cap\bar S \geq |\pi|/2$ and thus
\be \|P_d(\pi)^{\Gamma_S}\|_1  = d^{n - |\pi(S)\cap\bar S|} \leq d^{n -
  |\pi|/2}\label{eq:P-gamma-bound}.\ee
Then
\ba
1 & \geq \L|\tr \frac{P_d(\pi)^{\Gamma_S}}{d^{n -
    |\pi|/2}}M^{\Gamma_S}\R|
&\text{from H\"older, \eq{PPT-both-cond} and \eq{P-gamma-bound}}
\\
 & = d^{|\pi|/2}
\L| \L\langle M^{\Gamma_S}, P_d(\pi)^{\Gamma_S} \R\rangle\R| \\
& = d^{|\pi|/2} \L|\L\langle M, P_d(\pi)\R\rangle\R| \\
& \geq d^{|\pi|/2} |m_\pi| -
d^{|\pi|/2} \sum_{\pi' \neq \pi} |m_{\pi'}| G_{\pi,\pi'}\\
& \geq d^{|\pi|/2} |m_\pi| (1 - \sum_{\pi'\neq \pi}
d^{\frac{|\pi|-|\pi'|}{2}} d^{-|\pi^{-1}\pi'|})
& \text{by \eq{pi-arg-max}}\\
& \geq d^{|\pi|/2} |m_\pi| (1 - \sum_{\pi'\neq \pi}
d^{-|\pi^{-1}\pi'|/2})
& \text{by the triangle inequality, \eq{triangle}}\\
& \geq d^{|\pi|/2} |m_\pi| (2 - e^{n^2/2\sqrt{d}})
& \text{by \Cref{eq:G-row}}\\
& \geq  d^{|\pi|/2} |m_\pi| / (1+n^2/\sqrt d) &  \text{ using $n^2\leq \sqrt d$}
.\ea
\end{proof}
  
\section{Multipartite data hiding}\label{sec:hiding}
Let $\rho_0, \rho_1$ be density matrices on $n$ $d$-dimensional
systems that commute with all $U^{\ot n}$.  If $\|\rho_0 - \rho_1\|_1$
is large, then of course, given $\rho_b$ for $b\in \{0,1\}$, there is
some (global) measurement that can estimate $b$ with some
non-negligible bias.  Here we will argue that, on the other hand, LOCC
measurements, or even PPT measurements, cannot learn anything about
$b$.  This data-hiding scheme is due to Eggeling and
Werner\cite{eggeling02} who shows that it was secure when $n$ is fixed
and $d\ra\infty$.
Our contribution is to extend their analysis to the case when $n$ is
up to $O(\sqrt{d})$ by using our approximate orthogonality
relationship.

\begin{thm}\label{thm:hiding}
  Let $\rho_0,\rho_1$ be any density matrices on $(\bbC^d)^{\ot n}$
  that commute with all $U^{\otimes n}$, and let $\{M,I-M\}$ be a PPT
  measurement.  If $n \leq d^{1/4}$ then
  \be |\tr M(\rho_0-\rho_1)| \leq \frac{6n^2}{\sqrt d}.  \ee
\end{thm}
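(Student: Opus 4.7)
The plan is to reduce to the case that $M$ commutes with $U^{\otimes n}$ and then apply \lemref{PPT-coeff}. Define the Haar twirl $\bar M := \int_{\cU_d}(U^{\otimes n})^\dag M\, U^{\otimes n}\,dU$. Since each $\rho_b$ commutes with $U^{\otimes n}$ by hypothesis, $\tr \bar M \rho_b = \tr M\rho_b$, so the quantity of interest is unchanged. Moreover $\bar M$ commutes with every $U^{\otimes n}$, and since $n \leq d$ this places $\bar M$ in the span of the $P_d(\pi)$, allowing us to write $\bar M = \sum_\pi m_\pi P_d(\pi)$.

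The delicate point is to show the twirl preserves the PPT condition \eq{PPT-both-cond}, so that \lemref{PPT-coeff} applies. A direct calculation shows that for any product unitary $V = V_1\ot\cdots\ot V_n$ and any $S\subseteq[n]$ one has $(V M V^\dag)^{\Gamma_S} = W M^{\Gamma_S} W^\dag$, where $W$ equals $V_i$ on $i\in\bar S$ and the complex conjugate $\bar V_i$ on $i\in S$.  Since $W$ is itself unitary, both $0\preceq M^{\Gamma_S}\preceq I$ are preserved, and averaging preserves them as well.  Hence $\bar M$ is PPT, and \lemref{PPT-coeff} (whose hypothesis $n^2/\sqrt d\leq 1$ is exactly $n\leq d^{1/4}$) gives $|m_\pi|\leq(1+n^2/\sqrt d)\,d^{-|\pi|/2}$.

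With this in hand I would expand $\tr \bar M(\rho_0-\rho_1) = \sum_\pi m_\pi \tr P_d(\pi)(\rho_0-\rho_1)$.  The $\pi=e$ term vanishes because $\tr\rho_0=\tr\rho_1=1$, and every other term satisfies $|\tr P_d(\pi)(\rho_0-\rho_1)|\leq 2$ by H\"older (using $\|P_d(\pi)\|_\infty=1$ and $\|\rho_b\|_1=1$).  Combining,
\be |\tr M(\rho_0-\rho_1)| \leq 2\L(1+\frac{n^2}{\sqrt d}\R)\sum_{\pi\neq e}d^{-|\pi|/2}. \ee

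The remaining sum is the analog of \cref{eq:G-row} with $d$ replaced by $\sqrt d$: $\sum_{\pi\in\cS_n} d^{-|\pi|/2} = \prod_{j=1}^{n-1}(1+j/\sqrt d) \leq e^{n(n-1)/(2\sqrt d)}$.  Subtracting the $\pi=e$ term and using $e^x-1\leq xe^x$ together with $n^2/\sqrt d \leq 1$ yields a bound of the form $C n^2/\sqrt d$, comfortably within the claimed $6n^2/\sqrt d$.  I expect the only real obstacle to be the verification in the second paragraph---that the Haar twirl preserves the \emph{full} PPT condition across every subset $S$ simultaneously---since this is where the multipartite structure interacts nontrivially with the $U^{\otimes n}$ symmetry; once that is settled, the rest is mechanical given \lemref{PPT-coeff} and the row-sum identity already appearing in the proof of \lemref{bound-eig}.
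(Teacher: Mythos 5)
Your proof is correct and follows the same route as the paper: twirl $M$ to reduce to the $[M, U^{\otimes n}]=0$ case, expand $M = \sum_\pi m_\pi P_d(\pi)$, bound the coefficients via \lemref{PPT-coeff}, and sum over $\pi\neq e$ using the row-sum estimate \cref{eq:G-row} with $d$ replaced by $\sqrt d$. One thing you do that the paper does not is explicitly verify that the Haar twirl preserves the PPT condition (via the identity $(VMV^\dag)^{\Gamma_S} = W M^{\Gamma_S} W^\dag$ with $W$ a product of the $V_i$ and their conjugates); the paper passes over this after asserting ``WLOG $[M, U^{\otimes n}]=0$,'' so your check is a welcome addition, and the only remaining cosmetic difference is that the paper works with $M_0 - M_1$ under $-I\preceq M^{\Gamma_S}\preceq I$ rather than a single POVM element under $0\preceq M^{\Gamma_S}\preceq I$, which makes no difference since \lemref{PPT-coeff}'s argument uses only $\|M^{\Gamma_S}\|_\infty\leq 1$.
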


\begin{proof}
Let $M_0,M_1$ be the PPT measurement operators corresponding to guessing $0,1$ respectively.  Let $M=M_0-M_1$.  Then
\be -I \leq M^{\Gamma_S} \leq I
\label{eq:M-gamma-bound}\ee
 for any $S\subseteq [n]$, where $\Gamma_S$ corresponds to taking the partial transpose of indices $S$.
Let $\Delta = \rho_0-\rho_1$, so that the bias achieved by the measurement is $\tr M \Delta$.  Observe that $\tr\Delta=0$ and that $[\Delta,U^{\ot n}]=0$ for all $U$.

We can assume WLOG that $[M, U^{\ot n}]=0$ as well.  This is because
$$\tr M\Delta = \bbE_U \tr M U^{\ot n}\Delta (U^\dag)^{\ot n}
= \bbE_U \tr ( (U^\dag)^{\ot n}M U^{\ot n})\Delta.$$
Thus, we can write $M = \sum_{\pi\in\cS_n} m_\pi P_d(\pi)$.

The bias is now bounded by
\ba \tr M\Delta
 &= \sum_{\pi \neq e} m_\pi \tr[ P_d(\pi)\Delta] &\text{because $\tr\Delta=0$} \\
 &\leq \sum_{\pi \neq e} |m_\pi| \|P_d(\pi)\|_\infty \|\Delta\|_1
 &\text{triangle inequality and H\"older} \\
 &= \sum_{\pi \neq e} |m_\pi| \\
 &\leq 2 \sum_{\pi \neq e} d^{-|\pi|/2}  & \text{using \lemref{PPT-coeff}}\\
 & \leq 3 (e^{n^2 / \sqrt{d}}-1) \\
 & \leq 6 n^2/ \sqrt{d}  &  \text{ using $n^2\leq \sqrt d$}.
 \ea
\end{proof}

\thmref{hiding} applies to any two states $\rho_0,\rho_1$ satisfying the symmetry condition, although it is only interesting when $\|\rho_0 - \rho_1\|_1$ is large.  Coming up with one such pair is straightforward, but how many can be constructed simultaneously?  Here we can use Schur duality (c.f.~\Cref{sec:rep-theory}) to show that any state commuting with all $U^{\otimes n}$ must be of the form 
\be \sum_{\lambda\in \Par(n,d)} p_\lambda \rho_\lambda \otimes \tau_{\cQ_\lambda^d},\ee
where $\tau_{\cQ_\lambda^d} := I_{\cQ_\lambda^d} / \dim \cQ_\lambda^d$.
This permits $N = \sum_{\lambda \in \Par(n,d)}\dim P_\lambda$ perfectly orthogonal states.  Since $d\geq n$, $\Par(n,d)$ includes all partitions of $n$, and thus $\sum_{\lambda \in \Par(n,d)}\dim P_\lambda^2 = n!$.  As a result, $N\geq \sqrt{n!}$.  On the other hand, $\Par(n,n) \leq e^{2c\sqrt{n}}$ for $c\approx 1.28$, so $N\leq \sqrt{n!}e^{c\sqrt{n}}$.  This analysis also implies that $\frac 12 \log(n!)$ qubits can be hidden in such states.  If we are content with pairwise approximate distinguishability then exponentially more states can be hidden~\cite{Winter-finger}.

Another application concerns the distinguishability of $n$ copies of the same random state from $n$ copies of independently random states.  As density matrices, these correspond to $\E[\psi^{\otimes n}]$ and $(I/d)^{\otimes n}$ respectively.   If collective measurements are allowed then projecting onto the symmetric subspace will almost perfectly distinguish these states. But the situation is different with LOCC measurement.

\begin{cor}[Local purity tests]
If a PPT measurement is used to distinguish $\E[\psi^{\otimes n}]$ from $(I/d)^{\otimes n}$ it will achieve bias $\leq O(n^2/\sqrt d)$.
\end{cor}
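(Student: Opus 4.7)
The plan is to obtain this corollary as a direct application of \thmref{hiding} to the pair $\rho_0 := \E_\psi[\psi^{\otimes n}]$ and $\rho_1 := (I/d)^{\otimes n}$. The distinguishing bias of the PPT measurement $\{M,I-M\}$ is by definition $|\tr M(\rho_0-\rho_1)|$, and the bound $6n^2/\sqrt d$ from \thmref{hiding} matches the $O(n^2/\sqrt d)$ claimed here. So the only real task is to verify that both $\rho_0$ and $\rho_1$ satisfy the hypothesis of \thmref{hiding}, namely that each commutes with $U^{\otimes n}$ for every $U\in\cU_d$.

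I would dispatch the two states separately. The maximally mixed state $\rho_1 = (I/d)^{\otimes n}$ is a scalar multiple of the identity and so commutes with every operator on $(\bbC^d)^{\otimes n}$, trivially including each $U^{\otimes n}$. For $\rho_0$, I would use the left-invariance of the Haar measure on pure states under $\ket\psi\mapsto U\ket\psi$:
$$U^{\otimes n}\rho_0(U^\dag)^{\otimes n} = \E_\psi\left[(U\proj{\psi}U^\dag)^{\otimes n}\right] = \E_\psi[\psi^{\otimes n}] = \rho_0.$$
Equivalently, a standard Schur--Weyl calculation identifies $\rho_0$ with the normalized projector onto the symmetric subspace $\ssym d n$, which is manifestly $U^{\otimes n}$-invariant.

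Given this, the proof is immediate: invoke \thmref{hiding} with the chosen $\rho_0,\rho_1$ (implicitly in the regime $n\leq d^{1/4}$, so that the hypothesis of the theorem holds) to conclude $|\tr M(\rho_0-\rho_1)| \leq 6n^2/\sqrt d = O(n^2/\sqrt d)$. There is no substantive obstacle here — the genuine work was already done in \thmref{hiding}, and especially in \lemref{PPT-coeff}, which captures how the PPT constraint forces the coefficients $m_\pi$ in the expansion $M=\sum_\pi m_\pi P_d(\pi)$ to decay like $d^{-|\pi|/2}$ for $\pi$ far from the identity. The corollary is essentially a specialization of \thmref{hiding} to the natural pair of symmetric states relevant to purity testing.
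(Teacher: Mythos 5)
Your proof is correct and matches the paper's intent exactly: the corollary is stated as an immediate consequence of \thmref{hiding}, and you have verified the only nontrivial hypothesis, namely $U^{\otimes n}$-invariance of both $\E_\psi[\psi^{\otimes n}]$ (by Haar invariance, equivalently it is the normalized symmetric-subspace projector) and $(I/d)^{\otimes n}$ (trivially). Nothing more is needed.
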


Recently and independently
of this work, sharper upper
and lower bounds were found by Chen, Cotler, Huang and
Li~\cite{learning-memory} who showed that $n=\Theta(\sqrt d)$ copies
are necessary and sufficient for local purity testing.

\section{Limitations of local product tests}\label{sec:prod-purity}

Suppose we are given $n$ copies of a $k$-partite pure state $\ket\psi \in (\bbC^d)^{\otimes k}$.  We would like to know if $\ket\psi$ is close to being a product state $\ket{\psi_1}\ot \cdots \ot\ket{\psi_k}$ or far from any such state.  A natural test for this is to project all $n$ copies of each of the $k$ subsystems onto the $n$-fold symmetric subspace $\ssym d n$.  If all the projections succeed, output ``product'', otherwise output ``not product''.  This test was proposed by \cite{mintert05} and analyzed by \cite{HM13}.
The test can be easily shown to be optimal among a reasonable class of such product tests (see Section 5 of \cite{HM13}), but the projections require entangling operations across the $n$ copies.

How effective can be make product tests without such entangling
operations?  If an LOCC test existed, then it would imply that
$\QMA=\QMA(2)$~\cite{BCY}, and, depending on the accuracy of the test,
this might falsify the Unique Games Conjecture~\cite{BHKSZ12} or the
Exponential Time Hypothesis~\cite{HM13}.  In \cite{HM13} it was proved
that such a test cannot exist for $n=2$.  Here we show it cannot exist
even for larger values of $n$, and even in the easiest case where
$k=2$.

To be more precise we say that a product test consists of a two-outcome measurement $\{M,I-M\}$, corresponding to outcomes ``product'' and ``not product.''  The completeness $c$ is $\min \tr [M\psi^{\ot n}]$ over all product states $\psi$ while the soundness $s$ is $\max \tr[M\psi^{\ot n}]$ over all states $\psi$ with overlap $\leq 1/2$ with any product state. (The constant $1/2$ is arbitrary, however note that no state is orthogonal to all product states.) Define the {\em bias} to be
$b=c-s$.    The standard product test from \cite{mintert05} was proved in \cite{HM13} to have bias $\geq
\Omega(1)$ with $n=2$ and $k$ arbitrary.  However, we will see that this cannot be achieved by a PPT test unless $n$ grows with $d$.

\begin{thm}\label{thm:no-PPT}
If $\{M,I-M\}$ is a PPT product test for $k=2$ acting on $n$ copies of a state,
then its bias $b$ is $\leq O(n^2 / d^{1/4})$.
\end{thm}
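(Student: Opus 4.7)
The plan follows the template of \thmref{hiding}: symmetrize the measurement, bound the coefficients in the resulting permutation-operator expansion, and sum to control the bias.

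First, as in the proof of \thmref{hiding}, I would symmetrize $M$ by averaging over conjugation by $(U \otimes V)^{\ot n}$ with Haar-random $U, V \in \cU_d$. Because $(U \otimes V)^{\ot n}$ factorizes across the $n$ copies, this averaging preserves the PPT condition; and since product states and far-from-product states are each preserved under local unitaries, the symmetrized bias is at least as large as the original. By Schur-Weyl duality applied separately to Alice's and Bob's $n$-fold tensor products, the symmetrized $M$ takes the form $M = \sum_{\pi_A, \pi_B \in \cS_n} m_{\pi_A, \pi_B}\, P_d(\pi_A) \otimes P_d(\pi_B)$. Each $P_d(\pi)$ fixes $\ket{\phi}^{\ot n}$, so $c = \sum_{\pi_A,\pi_B} m_{\pi_A, \pi_B}$. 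The maximally entangled state $\ket{\Phi_d}$ has overlap at most $1/\sqrt d$ with any product state (hence qualifies as far-from-product for $d \geq 2$), so $s \geq \tr[M \Phi_d^{\ot n}]$, and the identity $(P_d(\pi_A) \otimes P_d(\pi_B))\ket{\Phi_d}^{\ot n} = (P_d(\pi_A\pi_B^{-1}) \otimes I)\ket{\Phi_d}^{\ot n}$ yields $\tr[M \Phi_d^{\ot n}] = \sum m_{\pi_A, \pi_B}\, d^{-|\pi_A\pi_B^{-1}|}$. Therefore
\be b \;\leq\; c - \tr[M \Phi_d^{\ot n}] \;=\; \sum_{\pi_A \neq \pi_B} m_{\pi_A, \pi_B}(1 - d^{-|\pi_A\pi_B^{-1}|}) \;\leq\; \sum_{\pi_A \neq \pi_B} |m_{\pi_A, \pi_B}|. \ee

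The core step is a bipartite analog of \lemref{PPT-coeff}. Treating $M$ as acting on $n$ copies each of dimension $d^2$, the PPT hypothesis is $\|M^{\Gamma_S}\|_\infty \leq 1$ for every $S \subseteq [n]$, where $\Gamma_S$ transposes each $d^2$-dimensional copy indexed by $S$. Because this partial transpose factorizes across Alice and Bob, $(P_d(\pi_A) \otimes P_d(\pi_B))^{\Gamma_S} = P_d(\pi_A)^{\Gamma_S} \otimes P_d(\pi_B)^{\Gamma_S}$, and \lemref{PPT} gives trace norm $d^{2n - k_A - k_B}$ with $k_A = |S \cap \pi_A(\bar S)|$ and $k_B = |S \cap \pi_B(\bar S)|$. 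For each $(\pi_A, \pi_B)$, \lemref{max-cut} applied to the pair produces a single $S$ with $k_A + k_B \geq (|\pi_A| + |\pi_B|)/4$. Running the argmax argument of \lemref{PPT-coeff} with the test operator $P_d((\pi_A^*)^{-1}) \otimes P_d((\pi_B^*)^{-1})$ for the maximizer $(\pi_A^*, \pi_B^*)$ of $|m_{\pi_A,\pi_B}|\, d^{(|\pi_A|+|\pi_B|)/4}$, together with H\"older's inequality and the triangle inequality \eqref{eq:triangle} applied to each factor, yields
\be |m_{\pi_A, \pi_B}| \;\leq\; (1 + O(n^2/d^{3/4}))\, d^{-(|\pi_A|+|\pi_B|)/4} \ee
for all pairs, provided $n \leq d^{1/4}$.

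Substituting this bound, $\sum_{\pi_A \neq \pi_B} |m_{\pi_A,\pi_B}|$ is controlled by $\bigl(\sum_\pi d^{-|\pi|/4}\bigr)^2 - \sum_\pi d^{-|\pi|/2}$. Each factor equals $\prod_{j=1}^{n-1}(1 + j/d^{1/4})$ or $\prod_{j=1}^{n-1}(1 + j/d^{1/2})$ by \eqref{eq:G-row} (applied with $d$ replaced by $d^{1/4}$ or $d^{1/2}$), and a first-order expansion in these products gives a difference of $O(n^2/d^{1/4})$, completing the bound. The main obstacle is the coefficient estimate: invoking \lemref{max-cut} with two permutations instead of one costs a factor of two in the exponent, so the decay is $d^{-(|\pi_A|+|\pi_B|)/4}$ rather than the $d^{-|\pi|/2}$ available in the single-permutation setting of \thmref{hiding}; this is why the bias scales as $n^2/d^{1/4}$ rather than the sharper $n^2/\sqrt d$.
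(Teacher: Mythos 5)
Your proposal is correct and follows essentially the same route as the paper: symmetrize $M$ by $(U\ot V)^{\ot n}$-twirling, write $M=\sum m_{\pi_A,\pi_B}P_d(\pi_A)\ot P_d(\pi_B)$, prove a bipartite analogue of \lemref{PPT-coeff} by pairing \lemref{PPT} with the two-permutation case of \lemref{max-cut} (accepting the factor-of-two loss, $d^{-|\pi|/4}$ instead of $d^{-|\pi|/2}$), and then sum the coefficient bounds using the row-sum estimate \eqref{eq:G-row}.

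The one small departure is how you launch the bias bound. The paper estimates $\tr M\Delta$ with $\Delta = \E[\psi_A^{\ot n}\ot\psi_B^{\ot n}] - \E_{\psi\in\bbC^{d^2}}[\psi^{\ot n}]$, which technically requires a concentration argument that Haar-random bipartite states have overlap $\leq 1/2$ with every product state except with exponentially small probability. You instead compare the completeness $c=\sum m_{\pi_A,\pi_B}$ (constant over all product states, since each $P_d(\pi)$ fixes $\phi^{\ot n}$) against the fixed witness $\Phi_d^{\ot n}$, using the transpose trick to evaluate $\tr[M\Phi_d^{\ot n}]=\sum m_{\pi_A,\pi_B}d^{-|\pi_A\pi_B^{-1}|}$. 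This is a touch cleaner: $\Phi_d$ has overlap $1/d\le 1/2$ with every product state deterministically, so no measure-concentration aside is needed, and the $\pi_A=\pi_B$ terms drop out identically from the bias difference. Everything else, including the argmax argument and the $(2-e^{\cdots})$-style denominator control via the triangle inequality \eqref{eq:triangle} and \eqref{eq:G-1-1}, matches the paper's proof.
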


Our relation between $n$ and $d$ is tight up to polynomial factors,
since when $n\gg d^2$ then state tomography can be carried out even
with no communication between subsystems.

\begin{proof}[Proof of \thmref{no-PPT}]
  Assume that $n\leq d^{1/8}$ since otherwise the theorem holds trivially.
  Let
  \be \Delta = \bbE_{\ket{\psi_A},\ket{\psi_B}\in \bbC^d}
  [\psi_A^{\ot n} \ot \psi_B^{\ot n}] - \bbE_{\ket{\psi} \in
    \bbC^{d^2}} [\psi^{\ot n}].\ee
  Our goal is to show that $\tr M\Delta$ is small for any PPT measurement $\{M,I-M\}$.

First, we observe that $\Delta$ commutes with $U^{\ot n} \ot V^{\ot
  n}$, and so without loss of generality we can assume that $M$ does as well.  Thus,
the arguments leading to \eq{Pd-decomp} imply that
\be M = \sum_{\pi_A, \pi_B \in \cS_n} m_{\pi_A, \pi_B} P_d(\pi_A) \ot
P_d(\pi_B)
\label{eq:M-decomp}.\ee
For convenience, we will refer to the pair $(\pi_A, \pi_B)$ as a
single permutation $\pi\in\cS_{2n}$.  Formally, we can embed $\cS_n
\times \cS_n$ into $\cS_{2n}$ as the set of permutations that does not
mix $\{1,\ldots,n\}$ and $\{n+1,\ldots,2n\}$.

We will need to develop a variant of \lemref{PPT-coeff} to show that 
\be  m_{\pi} \leq 2 d^{-|\pi|/4}.\label{eq:m-bipartite-bound}\ee
This will imply our desired result as follows:
\ba \tr M\Delta & = \sum_{\pi_A,\pi_B\in \cS_n} m_{\pi_A,\pi_B} \tr
(P_d(\pi_A) \ot P_d(\pi_B))\Delta
\\
& = \sum_{(\pi_A,\pi_B) \neq (e,e)}m_{\pi_A,\pi_B} \tr
(P_d(\pi_A) \ot P_d(\pi_B))\Delta
& \text{since $\tr \Delta=0$} \\
& \leq \sum_{(\pi_A,\pi_B) \neq (e,e)} |m_{\pi_A,\pi_B}|
\\& \leq 2
\sum_{(\pi_A,\pi_B) \neq (e,e)} d^{-\frac{|\pi_A|+ |\pi_B|}{4}}
\\& \leq 2
\L(\L(\sum_{\pi\in\cS_n} d^{-\frac{|\pi|}{4}}\R)^2 -1\R)
\\&\leq 2((e^{n^2/2d^{1/4}})^2-1)
\\&\leq \frac{4n^2}{d^{1/4}}\ea

Now we return to the proof of \eq{m-bipartite-bound}, which
essentially repeats the proof of  \lemref{PPT-coeff} but uses the
multiple-permutation version of \lemref{max-cut}.  The new feature of
this setting is that the
locality constraint here is between $A_1B_1:A_2B_2:\cdots:A_nB_n$
while the permutations $\pi_A$ and $\pi_B$ act on $A_1\ldots A_n$ and
$B_1\ldots B_n$ respectively.  Thus our PPT condition is that
$\|M^{\Gamma_S}\| \leq 1$ where $\Gamma_S$ is a shorthand for the transpose
of systems $\bigcup_{i\in S}\{A_i, B_i\}$.

Following the proof of \lemref{PPT-coeff}, let
\be \pi  := \arg\max_{\pi = \pi_A \times\pi_B} |m_\pi| d^{|\pi|/4}\ee
and use \lemref{max-cut} to find $S\subseteq [n]$ such that
\be
|S\cap \pi_A(\bar{S})| + |S\cap \pi_B(\bar{S})|
\geq \frac{|\pi|}{4} = \frac{|\pi_A| + |\pi_B|}{4} \ee
The rest of the proof is almost identical.
\ba \|P_d(\pi)^{\Gamma_S}\|_1  &= d^{n -
|S\cap \pi_A(\bar{S})|} \cdot d^{n - |S\cap \pi_B(\bar{S})|}
\leq d^{2n -|\pi|/4}\label{eq:P-gamma-2}.\\
1 & \geq
\L|\tr \frac{P_d(\pi)^{\Gamma_S}}{d^{2n -|\pi|/4}}M^{\Gamma_S}\R| \\
 & = d^{|\pi|/4}
\L| \L\langle M, P_d(\pi) \R\rangle\R| \\
& \geq d^{|\pi|/4} |m_\pi| -
d^{|\pi|/4} \sum_{\substack{\pi' \in \cS_n \times \cS_n \\ \pi' \neq \pi}} |m_{\pi'}| G_{\pi,\pi'}\\
& \geq d^{|\pi|/4} |m_\pi| (1 - \sum_{\substack{\pi' \in \cS_n \times \cS_n \\ \pi' \neq \pi}}
d^{\frac{|\pi|-|\pi'|}{4}} d^{-|\pi^{-1}\pi'|})\\
& \geq d^{|\pi|/4} |m_\pi| (1 - \sum_{\substack{\pi' \in \cS_n \times \cS_n \\ \pi' \neq \pi}}
d^{-\frac 34|\pi^{-1}\pi'|})
& \text{by the triangle inequality, \eq{triangle}}\\
& \geq d^{|\pi|/4} |m_\pi|\L (2 - \L(e^{\frac{n^2}{2d^{3/4}}}\R)^2\R)
& \text{by \Cref{eq:G-1-1}} \\
& \geq \frac 12 d^{|\pi|/4} |m_\pi|
\ea

\end{proof}

\appendix

\section{Full spectrum of the Gram matrix}\label{sec:rep-theory}

In this appendix we give a self-contained proof of \lemref{exact-eig}.  The idea is to decompose the permutation action $P_d$ into irreps of $\cS_n$.  We begin with some terminology from representation theory.

Let $\Par(n,d)$ denote the set of partitions of $n$ into $d$ parts;
that is $\lambda\in\Par(n,d)$ if
$\lambda = (\lambda_1,\ldots,\lambda_d)\in \bbZ^d_+$ with
$\lambda_1 \geq \cdots\geq \lambda_d\geq 0$ and
$\sum_{i=1}^d \lambda_i$.  We also identify $\lambda$ with the set of
$(i,j)\in\bbN^2$ with $j\leq \lambda_i$.  Schur duality states that
\be (\bbC^d)^{\ot n} \cong \bigoplus_{\lambda\in\Par(n,d)}
\cQ_\lambda^d \ot \cP_\lambda,\label{eq:schur-iso}\ee where
$\cQ_\lambda^d$ labels an irrep of $\cU_d$ and $\cP_\lambda$ labels an
irrep of $\cS_n$.  Let $\bq_\lambda^d(U)$ and $\bp_\lambda(\pi)$
denote the corresponding group actions of $\cU_d$ and $\cS_n$.  Assume
for convenience that $\bp_\lambda(\pi)$ is always a real orthogonal
matrix.  We let $\Usch$ denote the unitary isomorphism mapping the LHS
of \eq{schur-iso} to the RHS; however, we generally abuse notation and
omit writing $\Usch$.

We will need to make use of the following formulas for the dimensions
of these irreps.  Define $\tilde{\lambda} := \lambda +
(d-1,d-2,\ldots,1,0)$.  Then~\cite{GW09,Stanley-EC2}
\ba \dim \cQ_\lambda^d &=\frac{\prod_{1\leq i<j\leq d}
(\tilde{\lambda}_i - \tilde{\lambda}_j)}{\prod_{m=1}^{d-1} m!}
\label{eq:cQ-dim}\\
\dim \cP_\lambda&= \frac{n!}{\tilde{\lambda}_1! \tilde{\lambda}_2!
\cdots \tilde{\lambda}_d!}
\prod_{1\leq i<j\leq d}
(\tilde{\lambda}_i - \tilde{\lambda}_j)
\label{eq:cP-dim}\ea
We will need the ratio of these dimensions.   One can directly calculate (and see also \cite{Sta71})
\be
\frac{n!\dim \cQ_\lambda^d}{\dim \cP_\lambda}
 = \prod_{i=1}^d \frac{\lambda_i+d-i!}{d-i!}
= \prod_{i=1}^d \prod_{j=1}^{\lambda_i} d-i+j\ee
This last double product can be abbreviated as the product over
$(i,j)\in \lambda$, where $\lambda$ is overloaded to mean both the
partition $\lambda_1,\ldots,\lambda_d$ and the set $\{(i,j): 1\leq j\leq
\lambda_i\}$.

\begin{proof}[Proof of \lemref{exact-eig}]

Let $\{\ket{\pi}:
\pi\in\cS_n\}$ denote a set of orthonormal vectors indexed by the
permutations  and define $\ket{v_\pi} = (I \ot P_d(\pi))\ket{\Phi_d}^{\ot
  n}$.
We also define the maximally entangled states
$\ket{\Phi_{\cP_\lambda}}\in \cP_\lambda \ot \cP_\lambda$ and
$\ket{\Phi_{\cQ_\lambda^d}}\in \cQ_\lambda^d \ot (\cQ_\lambda^d)^*$ to
be unit vectors that are invariant respectively under $\bp_\lambda(\pi) \ot
\bp_\lambda(\pi)$ for all $\pi\in\cS_n$ and
$\bq_\lambda^d(U)\ot\bq_\lambda^d(U)^*$ for all $U\in\cU_d$. (We can omit the $*$  for $\cP_\lambda$ because we have taken
$\bp_\lambda(\pi)$ to be real orthogonal matrices.)  By Schur's Lemma,
these conditions specify $\ket{\Phi_{\cP_\lambda}}$ and
$\ket{\Phi_{\cQ_\lambda^d}}$ uniquely, up to a phase.  To set this
phase, let $\ket{\Phi_d}^{\ot
  n}:=\sum_{\lambda\in\Par(n,d)}\sqrt{\frac{\dim\cQ_\lambda^d\dim\cP_\lambda}{d^n}}
\ket{\lambda,\lambda}\ket{\Phi_{\cQ_\lambda^d}}\ket{\Phi_{\cP_\lambda}}
$.  Thus
\be \ket{v_\pi} =
 \sum_{\lambda\in\Par(n,d)}\sqrt{\frac{\dim\cQ_\lambda^d\dim\cP_\lambda}{d^n}}
\ket{\lambda,\lambda}\ket{\Phi_{\cQ_\lambda^d}}(I \ot
\bp_\lambda(\pi))\ket{\Phi_{\cP_\lambda}}
.\ee

Observe that $\braket{v_{\pi_1}}{v_{\pi_2}} =
\langle P_d(\pi_1), P_d(\pi_2)\rangle$.  Define the matrix $K^{(n,d)} :=
  \sum_{\pi\in\cS_n} \ket\pi\bra{v_\pi}$, and observe that $G^{(n,d)} =
  K^{(n,d)}(K^{(n,d)})^\dag$.  Thus $G^{(n,d)}$ is isospectral to
\ba  (K^{(n,d)})^\dag K^{(n,d)} & =
\sum_{\pi\in\cS_n} \proj{v_\pi} \\
&=n! \sum_{\lambda\in\Par(n,d)}\frac{\dim\cQ_\lambda^d\dim\cP_\lambda}{d^n}
\proj{\lambda,\lambda}\ot \proj{\Phi_{\cQ_\lambda^d}}
\ot \frac{I_{\cP_\lambda}}{\dim\cP_\lambda}
\ot \frac{I_{\cP_\lambda}}{\dim\cP_\lambda}
\ea
\end{proof}

\section{Partitions are not approximately
  orthogonal}\label{app:partitions}

Most conclusions in this paper do not depend strongly on the
properties of $\cS_n$ or $\cU_d$.  As noted in \Cref{rem:cayley}, to
show that $\|G- I\|\leq n^2/2d$, we need only that $G_{x,y} =
d^{-\text{dist}(x,y)}$ where $\text{dist}(\cdot,\cdot)$ is the graph
distance on a graph of degree $\leq n^2/2$.  Could we replace $\cS_n$
with other sets?

Of course for general $N$-dimensional vectors, one can have
$\exp(O(N\eps^2)))$ vectors with pairwise inner product at most
$\eps$, but they must be collectively far from an orthonormal
basis. So pairwise distance certainly does not guarantee any kind of
approximate orthogonality in the collective sense we have discussed.

There is one natural analogue of $\cS_n$ where approximate
orthogonality also turns out to fail.  This example is due to Kevin
Zatloukal. Let $\cP_n$ be the set of partitions of the set
$[n]$.  For example, $\cP_3$ consists of five partitions:
$\{\{1\},\{2\},\{3\}\}$,
$\{\{1,3\},\{2\}\}$,
$\{\{1\},\{2,3\}\}$,
$\{\{1, 2\},\{3\}\}$, and 
$\{\{1, 2, 3\}\}$.   Given a partition $\Pi$, define $[d]^\Pi$ to be the
set of strings $x_1,\dots,x_n \in [d]^n$ where $x_i=x_j$ whenever
$i,j$ are in the same block of $\Pi$.  The corresponding quantum state
is
\be \ket{E_\Pi} := d^{-\frac{\text{number of blocks of }\Pi}{2}} \sum_{x\in
  [d]^\Pi} \ket{x}
.\ee
These states were used in 0811.2597.

Let $G[\cP_n]$ denote the Gram matrix of $\{\ket{E_\Pi}\}$ states,
while we use $G[\cS_n]$ to denote the Gram matrix studied in the rest
of the paper.  Concretely $G[\cP_n]_{\Pi_1,\Pi_2} =
|\braket{\Pi_1}{\Pi_2}|^2$. In both cases we have $1$ on the diagonal
and positive powers of $1/d$ for each off-diagonal entry.  In both
cases, the dimension is exponential in $n$. (The number of partitions
is given by the Bell numbers, which are $\leq n^n$.) However the
interpretation in terms of distances in a low-degree graph does not
exist.  Indeed, if $\Pi_0 = \{\{1,2,\ldots,n\}\}$ and $\Pi_S = \{S,
[n]-S\}$ for some nonempty $S\subset [n]$, then
$|\braket{\Pi_0}{\Pi_S}|^2 = 1/d$ and there are $2^n-2$ choices of
$S$.  As a result the norm of $G[\cP_n]$ is large unless $d \gg 2^n$.

\section*{Acknowledgments}
Thanks to Ashley Montanaro for many helpful discussions especially
about the product test and boson sampling; to Kevin Zatloukal for his
observations in \Cref{lem:bound-eig,app:partitions}; to Fernando
Brand\~ao for discussions about applications to k-designs; to Sepehr
Nezami for discussions surrounding \eq{nezami}; to Beno\^it Collins
and Jon Novak for helping me understand the math literature on this
topic.  \Cref{sec:bosons} benefited from helpful discussions with
Scott Aaronson, Raul Garcia-Patron and Dominik Hangleiter.

Funding is from NSF grants CCF-1452616, CCF-1729369, PHY-1818914 and
the NSF QLCI program through grant number OMA-2016245 as well as NTT
(Grant AGMT DTD 9/24/20).

%% BioMed_Central_Bib_Style_v1.01

\end{document}